\documentclass{article}
\usepackage{authblk}

\usepackage{graphicx}
\usepackage{subcaption}

\usepackage{commath}
\usepackage{amsmath}
\usepackage{amssymb}
\usepackage{amsthm}
\usepackage[ruled]{algorithm2e}
\usepackage{algorithmic}

\usepackage[backend=biber,sorting=none]{biblatex}
\newtheorem{theorem}{Theorem}[section]

\newtheorem{lemma}[theorem]{Lemma}
\newtheorem{remark}{Remark}
\newtheorem{assumption}{Assumption}

\DeclareMathOperator{\Tr}{trace}

\usepackage{hyperref}

\begin{document}

\title{Adaptive arrival cost update for improving Moving Horizon Estimation performance}

\author[1]{Guido Sanchez}
\author[1]{Marina Murillo}
\author[1]{Leonardo Giovanini}

\affil[1]{Research Institute for Signals, Systems and Computational Intelligence, sinc(i), FICH-UNL/CONICET, Ciudad Universitaria UNL, $4^\circ$ piso FICH, (S3000) Santa Fe, Argentina.}

\date{March 2017}

\maketitle

\begin{abstract}
Moving horizon estimation is an efficient technique to estimate states and parameters of constrained dynamical systems. It relies on the solution of a finite horizon optimization problem to compute the estimates, providing a natural framework to handle bounds and constraints on estimates, noises and parameters. However, the approximation of the arrival cost and its updating mechanism are an active research topic. The arrival cost is very important because it provides a mean to incorporate information from previous measurements to the current estimates and it is difficult to estimate its true value. In this work, we exploit the features of adaptive estimation methods to update the parameters of the arrival cost. We show that, having a better approximation of the arrival cost, the size of the optimization problem can be significantly reduced guaranteeing the stability and convergence of the estimates. These properties are illustrated through simulation studies.

\noindent \textbf{Keywords:} Moving Horizon Estimation, Arrival cost, Adaptive estimation, Constrained state estimation, Variable forgetting.
\end{abstract}

%\begin{keyword}
%Moving Horizon Estimation, Arrival cost, Adaptive estimation, Constrained state estimation, Variable forgetting.
%\end{keyword}

\section{Introduction} \label{sec1}
In control engineering, model--based control schemes assume that the states and parameters of the system are available for control law implementation. In practice, noisy measurements is the only information available from the system. Thus, the states and parameters have to be determined from these measurements using a dynamic model of the system. For linear systems, this problem has been solved and several methods based on different statistical measures have been developed \cite{jazwinski1970stochastic, crassidis2011optimal}.

The physical limits of the system can be modeled through bounds on its states and parameters. The omission of such information in the estimation algorithm may substantially hamper its performance \cite{haseltine2005critical}. Unfortunately, Kalman filter can not handle explicitly bounds on estimates (states and parameters) and ad hoc methods have been developed to handle constraints \cite{simon2010kalman}. The different approaches to enforce constraints in Kalman filter include model reduction \cite{simon2002kalman}, estimate projection \cite{hall2004mathematical}, gain projection \cite{teixeira2008gain}, probability density function truncation \cite{simon2010constrained} and system projection \cite{ko2007state}. These methods yield to suboptimal solutions at the best and unrealistic estimates when the statistics of unknown variables (initial states, measurement and process noises, disturbances) are poorly chosen. On the other hand, moving horizon estimation (MHE) solves an optimization problem to find the system estimates, providing a theoretical framework for constrained estimation.

\begin{figure}
\centering
    \includegraphics[width=0.85\textwidth]{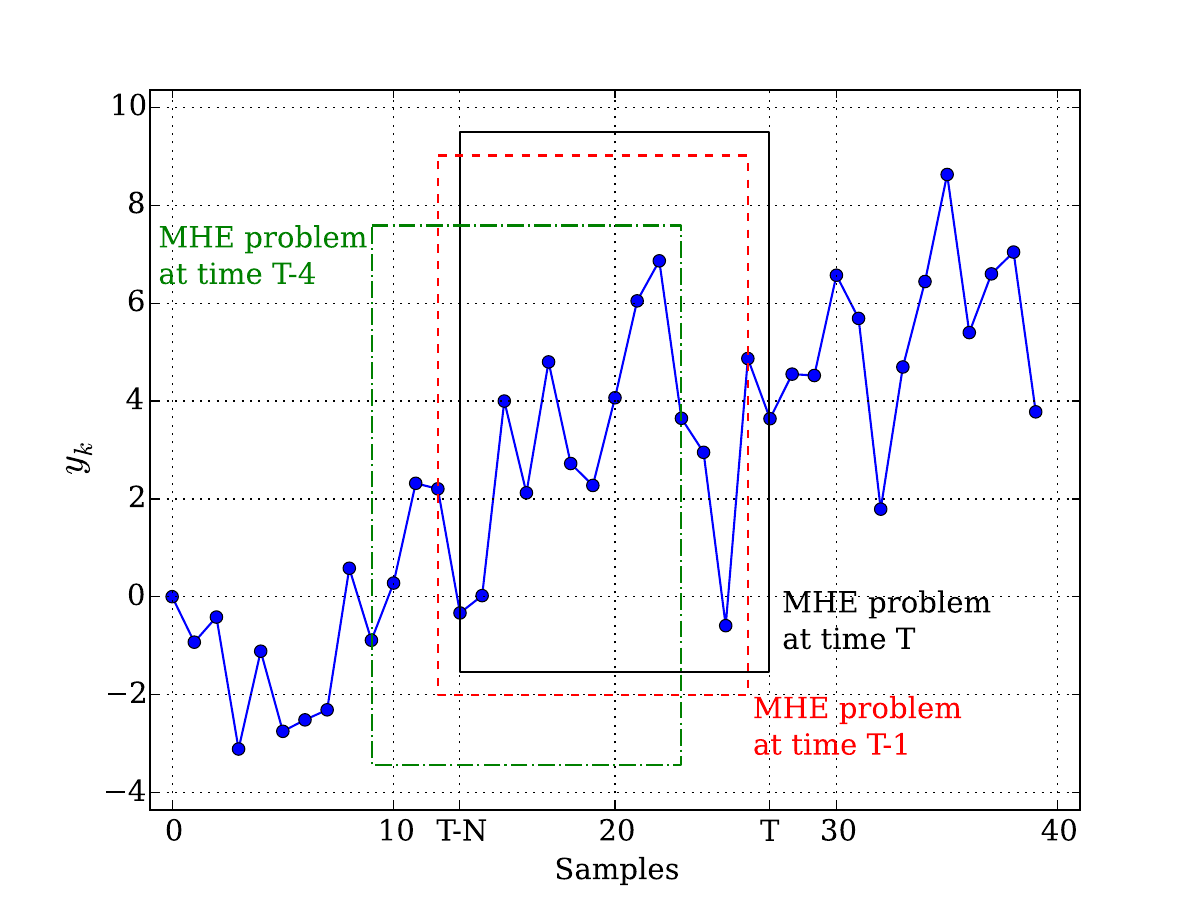}
    \caption{MHE smoothing update}
    \label{fig:smoothing}
\end{figure}

MHE solves at each sample a finite horizon state estimation problem to determine the states and parameters of the system.  When new measurements become available the old ones are discarded from the estimation window, and the estimation problem is solved to determine the new estimates (see Figure \ref{fig:smoothing}). The information of measurements which are not included in the estimation window is assimilated into the objective function through an extra term called \textit{arrival cost}. It characterizes the statistical distribution of states at the beginning of the estimation window given the prior measurements information. In this way, it allows MHE to approximate the \textit{full information} problem only considering a finite number of samples. A good approximation of the arrival cost allows to reduce the size of the estimation window and the size of the optimization problem, while it retains a good performance and robustness. The most accepted way of approximating the arrival cost is using a weighted 2-norm of the states at the beginning of the estimation window \cite{rao2001constrained, qu2009computation, kuhl2011real, chu2012moving}. For linear systems Rao et al. \cite{rao2001constrained} proposed to update the parameters of the arrival cost term (the weight matrix and initial states) using a Kalman filter or a Kalman smoother. The Gaussian distribution employed by these methods to model the conditional probabilities densities of initial states \cite{rao2001constrained, rao2002constrained} results in inadequate approximation of the arrival cost when estimates are constrained, leading to poor and unrealistic estimates. This problem arises from the fact that the presence of bounds on estimates modifies the probability distribution of noises and estimates, forcing to zero the probability of some values and eliminating the independence between estimates and noises \cite{robertson2002use}.

To tackle these problems, Chu et al. \cite{chu2012moving} derived an iterative arrival cost update scheme that uses a quadratic approximation and information of active/inactive constraints from previous iteration.  These ideas allow to build a quadratic approximation in the proximity of the optimal solution of the exact arrival cost. This update scheme is based on the hypothesis that the set of active constraints does not change after a particular time as the estimation horizon grows. This hypothesis works well when the estimation window is large, giving good results. However, if the hypothesis is not verified, i.e. some constraints become inactive after the state smoothing, the estimator may diverge. Furthermore, this updating mechanism overweights past data (by retaining active constraints) de-emphasizing the effect of information available in the new data on estimates. This fact can also cause divergence of the estimator if estimates are strongly correlated in time.

Finally, in a recent work Al-Matouq and Vincent \cite{al2015multiple} developed a MHE algorithm based on multiple estimation windows. The algorithm takes advantage of constraint inactivity to reduce the size of the optimization problem while it retains the stability and performance properties of full--information estimator. These properties are preserved by approximating the arrival cost with an unconstrained estimator that reformulates at each sample the objective function in the regions of constraints inactivity, allowing efficient long estimation windows.

In this work, the weighting matrix is updated using adaptive estimation algorithms in combination with the MHE filter solution in the previous sample, while the initial states are the MHE filter solution in the previous sample. It should be noted that approximating the arrival in this way is consistent with constraints and bound on estimates and noises, while at the same time it introduces a feedback mechanism between measured data and estimates, improving the overall performance of the estimator. In this way, the weighting matrix is computed in a closed-loop fashion rather than in an open-loop way, as it is done in standard estimation techniques. The efficiency of the proposed approach is evaluated by conducting simulations studies on a benchmark problem available in the literature. The main  contributions of  this paper  are to  show that  \textit{i)}  it is  important to  take into account bounds on estimates when the arrival cost is updated, \textit{ii)} this idea can be implemented using  adaptive  estimation  techniques  to  construct  an  approximation  of  the  arrival  cost,  \textit{iii)}  the estimation resulting from a MHE with the proposed updating scheme is stable, and \textit{iv)} the proposed updating scheme allows to shorten the estimation window without sacrificing performance, reducing the size of the optimization problem. The paper is organized as follows: In Section \ref{sec2} the MHE problem is presented. In Section \ref{sec3} two different ways of updating the arrival cost are analyzed. Simulation results are discussed in Section \ref{sec4} and the conclusions are outlined in Section \ref{sec_concl}.

\noindent \textbf{Notation:} In the sequel $x \in \mathbb{R}^{n}$ is a column vector and its transpose is denoted as $x^T$, $\mathbf{x}_k$ denotes a sequence of vectors over a given index up to $k$, for example $\mathbf{x}_k = \{ x_j : j = 0, 1, \ldots, k \}$,
$A \in \mathbb{R}^{n \times m}$ is a $n \times m$ matrix and its transpose is denoted by $A^T$.
If $A \in \mathbb{R}^{n \times n}$ its inverse is denoted by $A^{-1}$.
Given a symmetric real matrix $P \in \mathbb{R}^{n \times n}$, it is said to be positive definite if for all $z \in \mathbb{R}^{n}$, $z^T A z > 0$. In the following we will use the short notation $P>0$ to denote that matrix $P$ is positive definite.
By $|| \cdot ||$ we denote the Euclidean vector or induced matrix norm. For $x \in \mathbb{R}^n$ and $P > 0 \in \mathbb{R}^{n \times n}$, we let $||x||_P = \sqrt{x^T P x}$.

\section{Problem formulation} \label{sec2}
Consider the linear time invariant discrete system
\begin{equation}
\label{eq:model}
\begin{split}
x_{k+1} &= A x_k + G w_k, \\
y_k &= C x_k + v_k,
\end{split}
\end{equation}
where $x_k \in \mathcal{X} \subseteq \mathbb{R}^{n_{x}}$ is the state vector, $w_k \in \mathcal{W} \subseteq \mathbb{R}^{n_{w}}$ is the state noise vector, $y_k \in \mathbb{R}^{n_{y}}$ is the measurement vector, $v_k \in \mathcal{V} \subseteq \mathbb{R}^{n_{y}}$ is the measurement noise vector and $k \in \mathbb{I}$. In the following, $\hat{w}_k$ and $\hat{v}_k$ are considered zero mean stationary stochastic disturbances with finite moments and the restriction sets $\mathcal{X}$, $\mathcal{W}$ and $\mathcal{V}$ are considered closed with $\mathbf{0}\in\mathcal{X}$, $\mathbf{0}\in\mathcal{W}$ and $\mathbf{0}\in\mathcal{V}$.

The full information estimator\footnote{Other authors use the term batch least squares} (\textit{FIE}) problem uses a sequence of measurements $y_k$ to find the state estimates $\hat{x}_k$ that solve the following optimization problem
\arraycolsep=1.4pt
\begin{equation}
\label{fie_01}
\begin{array}{c}
\underset{\hat{x}_{0|k}, \mathbf{\hat{w}}_{k}}
{\operatorname{min}} \Phi_k = \norm{ \hat{x}_{0|k} - \bar{x}_0 } _ {P_{0}^{-1}}^2 + \sum_{j=0}^{k} \norm{ \hat{w}_{j|k}}_{Q^{-1}}^2 + \norm{\hat{v}_{j|k}}_{R^{-1}}^2 \\[0.5cm]
\text{s.t.} \left\{
\begin{array}{l}
\begin{array}{rl} \hat{x}_{j+1|k} =& A \hat{x}_{j|k} + \hat{w}_{j|k}, \\
y_{j} =& C \hat{x}_{j|k} + \hat{v}_{j|k},
\end{array} \\
\hat{x}_{j|k} \in \mathcal{X}, \  \hat{w}_{j|k} \in \mathcal{W}, \  \hat{v}_{j|k} \in \mathcal{V},
\end{array} \right.
\end{array}
\end{equation}
where $Q^{-1}$ and $R^{-1}$ are symmetric positive definite matrices that penalize the estimated noise vectors $\hat{w}_{j|k}$ and the output prediction $\hat{v}_{j|k}$ error, respectively, and the pair $( \bar{x}_0,P_0^{-1} )$ summarizes the prior information at time $k=0$ where $\bar{x}_0$ is the current knowledge of the initial estimate and $P_0^{-1}$ is a symmetric positive definite weighting matrix. The solution of problem (\ref{fie_01}) yields smoothed estimates $\{ \hat{x}_{j|k}: j=0,1,\ldots,k-1 \}$, the filtered estimate $\hat{x}_{k|k}$ and sequence of estimated noise vectors, denoted as $\mathbf{\hat{w}}_{k}$ and $\mathbf{\hat{v}}_{k}$.

Since the full information estimator uses all measurements, as new ones become available the problem size grows with time making it intractable. MHE overcomes this problem by only considering a fixed amount of data and dynamic updates by sliding a window with time. The past measurements are taken into account through a penalty cost term $Z_{k-N}$ called ``\emph{arrival cost}'' \cite{rao2001constrained}. Then, the MHE problem consists in finding the states, state and measurement noises that solve the following optimization problem
\begin{equation}
\label{eq:mhe1}
\begin{array}{c}
\underset{\hat{x}_{k-N|k}, \mathbf{\hat{w}}_{k} }
{\operatorname{min}} \Psi_k^N = Z_{k-N}(\hat{x}_{k-N|k}) + \sum_{j=k-N}^{k} \norm{\hat{w}_{j|k}}_{Q^{-1}}^2 + \norm{\hat{v}_{j|k}}_{R^{-1}}^2 \\[0.5cm]
\text{s.t.} \left\{
\begin{array}{l}
\begin{array}{rl} \hat{x}_{j+1|k} =& A \hat{x}_{j|k} + \hat{w}_{j|k}, \\
y_{j} =& C \hat{x}_{j|k} + \hat{v}_{j|k},
\end{array} \\
\hat{x}_{j|k} \in \mathcal{X}, \  \hat{w}_{j|k} \in \mathcal{W}, \  \hat{v}_{j|k} \in \mathcal{V}.
\end{array} \right.
\end{array}
\end{equation}
The minimization of the full information problem (\ref{fie_01}) is equivalent to minimizing the fixed horizon problem (\ref{eq:mhe1}) under the condition that $Z_{k-N}$ is the exact arrival cost $Z^*_{k-N}$
\begin{equation}
\label{exact_arrival_cost}
\begin{array}{c}
Z^*_{k-N} =
\underset{\hat{x}_{0|k-N}, \mathbf{\hat{w}}_{k-N} }
{\operatorname{min}} \; \norm{\hat{x}_{0|k-N} - \bar{x}_0}_{P_{0}^{-1}}^2 + \sum_{j=0}^{k-N-1} \norm{\hat{w}_{j|k-N}}_{Q^{-1}}^2 + \norm{\hat{v}_{j|k-N}}_{R^{-1}}^2 \\[0.5cm]
\text{s.t.} \left\{
\begin{array}{l}
\begin{array}{rl} \hat{x}_{j+1|k-N} =& A \hat{x}_{j|k-N} + \hat{w}_{j|k-N}, \\
y_{j} =& C \hat{x}_{j|k-N} + \hat{v}_{j|k-N}, \end{array} \\
\hat{x}_{j|k-N} \in \mathcal{X}, \  \hat{w}_{j|k-N} \in \mathcal{W}, \  \hat{v}_{j|k-N} \in \mathcal{V}.
\end{array} \right.
\end{array}
\end{equation}
The exact arrival cost $Z^*_{k-N}$ is a complicated piecewise quadratic function, whose transitions between segments are fixed upon the state of the constraints (active/inactive) in problem (\ref{fie_01}). Therefore, computing $Z^*_{k-N}$ involves the solution of a constrained quadratic problem (QP) which size grows with the horizon $k-N$. Consequently, $Z^*_{k-N}$ can not longer be computed solving the optimization problem (\ref{exact_arrival_cost}) and an approximation $\tilde{Z}_{k-N}$ is used. In conventional MHE formulations, a long estimation window is generally used to reduce the effect of the errors introduced by an improper approximation of the arrival cost \cite{patwardhan2012nonlinear}. In this way, MHE uses all the information available for the system, discarding the one that has no effect in the estimation due to the fading effect\footnote{With fading effect we mean that if the system can be represented with a FIR model of order $N$, then an horizon of size $N$ is employed, thus the system is fully described inside the horizon.} (finite memory) of the system. On the other hand, if the estimation window is shortened, a better approximation of the arrival cost is needed such that it incorporates this information into the problem. For linear systems, it is generally assumed that the distribution is Gaussian and the weighting matrix in the arrival cost is updated using a Kalman filter \cite{muske1993receding}. Moreover, for unconstrained linear systems under the Gaussian assumption MHE is equivalent to the Kalman filter. The penalty matrix is generally calculated using dynamic programming techniques, and it turns out to be the covariance of the state estimate at time $k-N$ in the stochastic interpretation \cite{rao2001constrained}.

\section{Arrival cost update} \label{sec3}
The MHE update relies on representing the full information problem (\ref{fie_01}) in the form (\ref{eq:mhe1}), where $N$ is fixed. Then, for each $k>N$, the arrival cost $Z^*_{k-N}$ is computed using (\ref{exact_arrival_cost}) and a new solution for the MHE problem is obtained. To avoid the curse of dimensionality of problem (\ref{exact_arrival_cost}), it can be updated recursively by extending the partial minimization of (\ref{exact_arrival_cost}) to the next decision variable
\begin{equation}
\label{bellman1}
\begin{array}{c}
Z^*_{k-N} = \underset{\mathbf{\hat{w}}_k }
{\operatorname{min}} \; Z^*_{k-N-1} + \norm{ \hat{w}_{k-N-1|k}}_{Q^{-1}}^2  +  \norm{\hat{v}_{k-N|k}}_{R^{-1}}^2 \\[0.5cm]
\text{s.t.} \left\{
\begin{array}{l}
\begin{array}{rl} \hat{x}_{k-N|k} =& A \hat{x}_{k-N-1|k-1} + \hat{w}_{k-N-1|k}, \\
y_{k-N} =& C \hat{x}_{k-N|k} + \hat{v}_{k-N|k}, \end{array} \\
\hat{x}_{k-N|k} \in \mathcal{X}, \  \hat{w}_{k-N-1|k} \in \mathcal{W}, \  \hat{v}_{k-N|k} \in \mathcal{V}.
\end{array} \right.
\end{array}
\end{equation}
Starting with
\begin{equation}
Z^*_0 = \norm{ \hat{x}_0 - \bar{x}_0}_{P_0^{-1}}^2,
\end{equation}
since it is hard to find an analytic expression for the exact arrival cost of the constrained MHE problem, we will follow Rao et al. \cite{rao2001constrained} approach and use an approximate arrival cost based on the arrival cost for the unconstrained problem. Therefore, the exact arrival cost will be approximated by the following quadratic approximation
\begin{equation}
\label{z_recursion}
\tilde{Z}_{k-N} = \norm{ \hat{x}_{k-N} - \bar{x}_{k-N} }_{P_{k-N}^{-1}}^2,
\end{equation}
where $P_{k-N}$ and $\bar{x}_{k-N}$ fully define the approximate arrival cost. The update of the approximate arrival cost $\tilde{Z}_{k-N}$ can be formulated as an update of $P_{k-N}$ and $\bar{x}_{k-N}$. Considering the partial minimization problem for the approximate iteration (\ref{bellman1}), where $\tilde{Z}_{k-N-1}$ is given by (\ref{z_recursion}), the result is a constrained sequential estimation problem.

For unconstrained problems, the solution of (\ref{bellman1}) using (\ref{z_recursion}) leads to an unconstrained sequential estimation problem that can be solved using a Kalman filter \cite{muske1993receding}. Under this approach, the prior information is transferred to the current estimated window by conditioning the estimates of time $k$ using the optimal estimate $\hat{x}^*_{k-N-1|k-N-1}$. For constrained problems this idea can still be employed, replacing the unconstrained Kalman filter by any of the ad-hoc methods described in \cite{simon2010kalman}. However, the constraints are only applied to estimates and the weighting matrices remain unaffected. The major argument in favor of this kind of update lays in the fact that no input information $y_j$ is over weighted during the estimation process, which guarantees that no measurement information is used twice. One disadvantage of using the filtering update lays in the fact that a periodic behavior associated with relying on estimates based on few data points occurs. This phenomenon is known as \emph{estimation cycling} and can be avoided using the smoothing update \cite{tenny2002efficient}. The smoothing formulation takes advantage of more information by including more data in the $\bar{x}_{k-N}$ update. Another problem is the inaccurate approximation of the arrival cost when inequality constraints switch between being active and inactive at the optimal solution. Constraints modify the probability distribution and noises in a way that analytical updates can not properly approximate. In this work, the approximate arrival cost update (\ref{bellman1}) is formulated as an update of $P_{k-N}$ and $\bar{x}_{k-N}$ in (\ref{z_recursion}) in two steps: \textit{i}) update $\bar{x}_{k-N}$ and then \textit{ii}) update $P_{k-N}$.

\subsection{Computing $\bar{x}_{k-N}$}

One of the components of $\tilde{Z}_{k-N}$ is the initial state $\bar{x}_{k-N}$. Since the MHE problem yields smoothed estimates $\{ \hat{x}_{j|k}: j=k-N,\ldots,k-1 \}$ and the filtered estimate $\hat{x}_{k|k}$, we can use the one time-step before optimal smoothed state estimate $\hat{x}^*_{k-N|k-1}$,
\begin{equation*}
\bar{x}_{k-N} = \hat{x}^*_{k-N|k-1}.
\end{equation*}
The use of the smoothed update implies the use of similar approximation hypothesis like Chu et al. \cite{chu2012moving}: the state of constraints (active/inactive) does not change once the estimate leaves the estimation window.

\subsection{Computing $P_{k-N}$}
The weighting matrix $P_{k-N}$ can be seen as the covariance of $x_{k-N}$ in the stochastic interpretation of the MHE problem \cite{rao2001constrained}. Analytical solutions for computing $P_{k-N}$ only exists for linear unconstrained problems. For linear constrained problems (and nonlinear problems) approximate solutions have been derived. They are based on a recursive update that employs the information of the system available at time $k$ (system model, previous covariance matrix, active constraints, covariance of the noises). These equations evolve in an open-loop fashion, based on the system model and the assumed statistics of the noises, without incorporating any information from measurements or estimates.

In signal processing and adaptive estimation a recursive update based on measurements and/or estimates is often used \cite{landau1998adaptive}
\begin{equation}  \label{cov_01}
   P^{-1}_{k-N} = \alpha_k P^{-1}_{k-N-1} + \beta_k \hat{x}_{k-N|k-1} \hat{x}^T_{k-N|k-1} \quad P_{0} > 0,
\end{equation}
where $0 < \alpha_{k} < 1 \text{ and } 0 < \beta_{k} \leq 2 \; \forall k\geq 0$ are time-varying weighting sequences. Equation (\ref{cov_01}) generates recursively a real-time estimation of the weighting matrix $P_{k-N}$ by updating the previous estimates $P_{k-N-1}$ with an exponential time averaging of $\hat{x}_{k-N|k-1} \hat{x}^T_{k-N|k-1}$. This updating mechanism can be understood as a time-varying filter whose input is $\hat{x}_{k-N|k-1} \hat{x}^T_{k-N|k-1}$ and the initial condition is $P_{0}$.

% \begin{lemma}
% \label{lemma_outer_prod}
% Let $x \in \mathbb{R}^n$, then $x x^T$ is positive semidefinite.
% \end{lemma}

% \begin{proof}
% The matrix $X=x x^T$ is symmetric, since
% \begin{equation*}
% X^T = (x x^T)^T = (x^T)^T x^T = x x^T = X
% \end{equation*}
% Given that $X$ is symmetric, we will show that it is positive semidefinite. Let $z \in \mathbb{R}^n$ be nonzero, then
% \begin{equation*}
%     z^T X z = z^T x x^T z = (x^T z)^T (x^T z) = ||x^T z||^2_2 \geq 0
% \end{equation*}
% \end{proof}

\begin{lemma}
\label{ob_sum_pos}
Let $\{ A_i : i=1,\ldots, n\}$ be positive definite and let $\{ \gamma_i : i=1,\ldots, n\}$ be nonnegative real numbers. Then $\sum_{i=1}^{n} \gamma_i A_i$ is positive semidefinite.
\end{lemma}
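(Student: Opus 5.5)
The plan is to verify the two defining properties of positive semidefiniteness directly, working with the quadratic form $z^T(\cdot)z$ and using linearity. Set $S=\sum_{i=1}^{n}\gamma_i A_i$.

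\emph{Symmetry.} In line with the paper's Notation, positive definite matrices are taken to be symmetric, so $A_i^T=A_i$ for every $i$. Since transposition is linear and each $\gamma_i$ is a real scalar,
\[
S^T=\sum_{i=1}^{n}\gamma_i A_i^T=S .
\]

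\emph{Nonnegativity of the quadratic form.} Fix an arbitrary $z\in\mathbb{R}^{n_x}$. By linearity,
\[
z^T S z=\sum_{i=1}^{n}\gamma_i\, z^T A_i z .
\]
Each $z^T A_i z$ is nonnegative: it is strictly positive when $z\neq 0$ because $A_i>0$, and it equals zero when $z=0$. Each coefficient $\gamma_i$ is also nonnegative, so every summand $\gamma_i\,z^T A_i z$ is $\ge 0$. A finite sum of nonnegative reals is nonnegative, hence $z^T S z\ge 0$ for all $z$, which is the definition of $S$ being positive semidefinite.

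\emph{Why strict positivity is not claimed.} If all $\gamma_i=0$, then $S=0$, which is positive semidefinite but not positive definite. So the conclusion cannot be strengthened without an extra hypothesis. If at least one $\gamma_i$ is strictly positive, the same computation gives $z^T S z\ge \gamma_i z^T A_i z>0$ for $z\neq 0$, so $S$ is then positive definite. This refinement is likely what will be needed later when the lemma is applied to the update (\ref{cov_01}).

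\emph{Main obstacle.} The only point needing care is symmetry. If the intended definition does not build symmetry in, the statement should be read with symmetric $A_i$, consistent with the paper's Notation; the argument above makes that assumption explicit. Everything else follows immediately from linearity of the quadratic form and the signs of the terms.
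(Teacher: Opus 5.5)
Your proof is correct and matches the paper's argument. Both expand $z^T\bigl(\sum_i \gamma_i A_i\bigr)z=\sum_i \gamma_i\, z^T A_i z\ge 0$ by linearity, and both note that the sum becomes positive definite once some $\gamma_i>0$; your explicit symmetry check is a small step the paper leaves implicit.
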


\begin{proof}
Let $x \in \mathbb{R}^n$ be nonzero and observe that
$x^T(\sum_{i=1}^{n} \gamma_i A_i)x = \sum_{i=1}^{n} \gamma_i(x^T A_i x) \geq 0$. Since each $\gamma_i \geq 0$ and each $x^T A x > 0$. The latter sum results in a positive definite matrix if any of the summands is positive.
\end{proof}

\begin{lemma} If $P_{0} > 0$ and $\alpha_0 > 0$, then the matrix $P_{k}$ obtained using Eq. (\ref{cov_01}) is positive definite for all $k \geq 0$.
\label{lem:p_defpos}
\end{lemma}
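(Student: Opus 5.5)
The plan is to argue by induction on $k$, working with the inverse matrices $P_k^{-1}$ that the recursion (\ref{cov_01}) actually propagates. Once every $P_k^{-1}$ is shown to be positive definite, the result for $P_k$ follows immediately: a symmetric positive definite matrix is invertible, and its inverse is again symmetric positive definite, since its eigenvalues are the reciprocals of positive numbers. The base case is the hypothesis $P_0>0$, which gives $P_0^{-1}>0$.

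For the inductive step, assume $P^{-1}_{k-N-1}>0$ and write
\begin{equation*}
P^{-1}_{k-N} = \alpha_k P^{-1}_{k-N-1} + \beta_k \hat{x}_{k-N|k-1} \hat{x}^T_{k-N|k-1}.
\end{equation*}
The outer product $X=\hat{x}\hat{x}^T$ is symmetric and positive semidefinite, because $z^T X z = (\hat{x}^T z)^2 \geq 0$ for every $z$. The first summand is positive definite since $\alpha_k>0$; the paper imposes $0<\alpha_k<1$ for all $k$, and the hypothesis $\alpha_0>0$ in the statement refers to the same sign requirement. The second summand is positive semidefinite since $\beta_k>0$. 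So for any nonzero $z$,
\begin{equation*}
z^T P^{-1}_{k-N} z = \alpha_k z^T P^{-1}_{k-N-1} z + \beta_k (\hat{x}^T z)^2 > 0 .
\end{equation*}
This is the remark in the proof of Lemma \ref{ob_sum_pos}: a nonnegative combination is positive definite as soon as one summand is positive definite with a positive coefficient. Lemma \ref{ob_sum_pos} cannot be quoted verbatim, because it assumes every $A_i$ is positive definite, whereas the rank-one term is only semidefinite. I would therefore verify the inequality directly as above, which is just that lemma's argument with one "$>$" replaced by "$\geq$". Symmetry of $P^{-1}_{k-N}$ is preserved because both summands are symmetric.

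The main subtlety is bookkeeping rather than mathematics. The recursion is indexed by $k-N$, while the statement is indexed by $k$. There is also a possible need to carry symmetry along explicitly so that ``positive definite'' makes sense under the paper's definition. I would re-index $m=k-N$ and run the induction over $m\geq 0$, noting that each step needs only $\alpha>0$ and $\beta\geq 0$. This makes clear that the conclusion holds for all $k$.
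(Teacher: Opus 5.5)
Your argument is correct and is essentially the paper's proof: it combines the positive semidefiniteness of the outer product $\hat{x}\hat{x}^T$ with the positive-combination reasoning of Lemma \ref{ob_sum_pos}. You make explicit what the paper's one-line proof leaves implicit: the induction, the need for $\alpha_k>0$ at every step (not just $\alpha_0$), the fact that Lemma \ref{ob_sum_pos} as stated does not literally cover a semidefinite summand, and the passage from $P_k^{-1}>0$ to $P_k>0$.
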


\begin{proof}
Since $x x^T$ is positive semidefinite and from Lemma \ref{ob_sum_pos} it follows that $P_k$ is positive definite for all $k \geq 0$.
\end{proof}

\begin{lemma}
\label{lemma_inverse}
If $P>0$, then there exists an invertible matrix $Z$ such that
$Z^T P Z = I$ and $Z Z^T = P^{-1}$.
\end{lemma}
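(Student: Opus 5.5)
We will construct $Z$ directly from the spectral decomposition of $P$. Since $P$ is symmetric and positive definite, the spectral theorem gives an orthogonal matrix $U$ ($U^TU=UU^T=I$) and a diagonal matrix $\Lambda=\mathrm{diag}(\lambda_1,\ldots,\lambda_n)$ with $P=U\Lambda U^T$. Positive definiteness forces every $\lambda_i>0$: taking $z=Ue_i$ gives $z^TPz=\lambda_i$. Hence $\Lambda^{-1/2}=\mathrm{diag}(\lambda_1^{-1/2},\ldots,\lambda_n^{-1/2})$ is well defined and invertible.

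Set $Z=U\Lambda^{-1/2}$. It is invertible as a product of invertible matrices, with $Z^{-1}=\Lambda^{1/2}U^T$. The first identity follows directly:
\begin{equation*}
Z^TPZ=\Lambda^{-1/2}U^T\,U\Lambda U^T\,U\Lambda^{-1/2}=\Lambda^{-1/2}\Lambda\Lambda^{-1/2}=I.
\end{equation*}
For the second identity,
\begin{equation*}
ZZ^T=U\Lambda^{-1}U^T.
\end{equation*}
This equals $P^{-1}$, because $(U\Lambda U^T)(U\Lambda^{-1}U^T)=I$. Alternatively, the second identity follows from the first alone: from $Z^TPZ=I$ we get $P=Z^{-T}Z^{-1}$, so $P^{-1}=ZZ^T$.

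There is no real obstacle here. The only point that needs care is justifying that the eigenvalues are strictly positive, which is what makes $\Lambda^{-1/2}$ exist. A Cholesky-based construction would also work: write $P=LL^T$ and take $Z=L^{-T}$.
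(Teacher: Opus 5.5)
Your proof is correct. The paper gives no argument of its own for this lemma; it only points to Lemma~7 on p.~261 of Johnson's 1970 article. Your spectral construction $Z = U\Lambda^{-1/2}$, or equally the Cholesky variant $Z = L^{-T}$, therefore supplies the self-contained proof the paper leaves out.

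The citation keeps the paper short. Your version makes explicit the two facts the lemma actually rests on:
\begin{itemize}
\item strict positivity of the eigenvalues, which you justify correctly by testing with $z = Ue_i$;
\item symmetry of $P$, which is part of the paper's definition of $P>0$ and is genuinely needed, since $Z^TPZ = I$ forces $P = Z^{-T}Z^{-1}$ to be symmetric.
\end{itemize}
Your observation that $ZZ^T = P^{-1}$ follows from $Z^TPZ = I$ alone is also a useful economy: any congruence of $P$ to the identity gives the second identity automatically.
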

\begin{proof}
See Lemma 7, page 261 from Johnson \cite{Johnson1970}.
\end{proof}

Note that $\alpha_k$ and $\beta_k$ in (\ref{cov_01}) have opposite effects. When $\alpha_k \ll 1$ the adaptation algorithm tends to discard the old information summarized in $P_{k-N-1}$, estimating $P_{k-N}$ using the recent data $\hat{x}_{k-N|k-1}$ . This fact allows to avoid the overweighting of past data (by retaining data with active constraints) that can deteriorate the estimator performance, improving the estimation of $\hat{x}_{k-N|k}$. From an optimization point of view, when the algorithm discards the data without useful information, the matrix $P^{-1}_{k-N}$ increases and the initial information available in the arrival cost tends to be more relevant. On the contrary, when $\alpha_k \approx 1$ the adaptation algorithm tends to retain the old data summarized in $P_{k-N-1}$.  This fact allows to include all the past information that is relevant for the estimation problem, improving  the estimation of $\hat{x}_{k-N|k}$ by including more information and forgetting only initial data, facts that decrease $P^{-1}_{k-N}$. When $\beta_k \gg 0$ the adaptation algorithm tends to increase $P^{-1}_{k-N}$ by incorporating information from estimates $\hat{x}_{k-N|k-1}$. This fact allows to include more information from recent data into $P^{-1}_{k-N}$, increasing its value. When $\beta_k \approx 0$ the adaptation algorithm discards the information available in $\hat{x}_{k-N|k-1}$ and only applies a forgetting mechanism on $P^{-1}_{k-N}$, which depends on the value of $\alpha_k$. In general, $\alpha_k$ and $\beta_k$ allow the estimation algorithm to discard and/or incorporate information into $P^{-1}_{k-N}$ according to the information available in the measurements $\{ y_j \}_{j=k-N}^{k}$, estimates $\hat{x}_{k-N|k}$ and process noises $\{ w_j \}_{j=k-N}^{k}$.

Since the existence of the inverse is guaranteed by Lemma \ref{lemma_inverse}, we can use the matrix inversion lemma to rewrite Eq. (\ref{cov_01}) as follows
\begin{equation}  \label{cov_02a}
  P_{k-N} = \frac{1}{\alpha_k} \left[ P_{k-N-1} - \frac{P_{k-N-1} \hat{x}_{k-N|k-1} \hat{x}^T_{k-N|k-1} P_{k-N-1}}{ \frac{\alpha_k}{\beta_k} + \hat{x}^T_{k-N|k-1} P_{k-N-1} \hat{x}_{k-N|k-1}} \right].
\end{equation}
This way of computing $P_{k-N}$ introduces a feedback loop into the MHE problem. This fact allows the MHE problem to adjust $P^{-1}_{k-N}$ according to the information available in the measurements $\{ y_j \}_{j=k-N}^{k}$ and to guarantee the convergence of estimates $\hat{x}_{k-N|k}$ and $\{ \hat{w}_j \}_{j=k-N}^{k}$. We must be careful in the choice of the arrival cost update mechanism, since poor approximations can lead to instability \cite{findeisen1997moving, rao2000moving} in the estimator. In the following Sections the conditions that sequences $\alpha_k$ and $\beta_k$ must satisfy to guarantee convergence will be studied.
If the current sequence of process noise $\{ w_j \}_{j=k-N}^{k}$ is large, $P^{-1}_{k-N}$ will grow emphasizing $\tilde{Z}_{k-N}$ over the other terms. This fact will lead to $\tilde{Z}_{k-N} \to 0$ and improving the overall estimation. Then, $P^{-1}_{k-N}$ will decrease until $\tilde{Z}_{k-N}$ achieves its stationary value or the feedback starts again (see Fig \ref{fig:pinv_varvar}).

Depending on the values of $\alpha_k$ and $\beta_k$, different forgetting factor profiles, suitable for different adaptation problems, are obtained \cite{landau1998adaptive}. For example, if the estimates are varying slowly and constraints does not change their state after leaving the estimation window, the most appropriate selection is
\begin{equation}  \label{f_ff}
     0<\alpha_k=\alpha<1, \quad \beta_k=1,
\end{equation}
which leads to a \textit{constant forgetting factor}. In this case, equation (\ref{cov_02a}) will estimate $P_{k-N}$ using the available MHE estimates, de-emphasizing past observations at a constant rate.
When a constant forgetting factor is used, the choice of $\alpha_k$ and $\beta_k$ given by Eq (\ref{f_ff}) leads to a decreasing adaptation gain, where old information is continually forgotten. If the forgetting factor $\alpha$ is not carefully chosen in this schema, this may lead to an exponential growth of the matrix $P_{k-N}$ and it can turn the MHE algorithm extremely sensitive to disturbances and susceptible to numerical difficulties \cite{fortescue1981implementation}.

This is why several researchers \cite{aastrom1973self, osorio1981deterministic, Ljung1990} suggest to use a variable forgetting factor, which enables the algorithm to follow slow and sudden changes in the system, while it can prevent matrix $P_{k-N}$ from blowing-up.
This choice is appropriate when the estimates are stationary  and constraints can change their state after leaving the estimation window \cite{landau1998adaptive}, since some forgetting must be done in order to maintain the information of the relevant constrained states.
In this work we review some of the many mechanisms available to update the parameters $\alpha_k$ and $\beta_k$.
For example, if the evolution of $\alpha_k$ is given by
\begin{equation}  \label{v_ff}
  \alpha_k = 1 - \frac{ \hat{x}^T_{k-N|k-1}P_{k-N-1} \hat{x}_{k-N|k-1}}{1 + \hat{x}^T_{k-N|k-1} P_{k-N-1} \hat{x}_{k-N|k-1} }, \quad \beta_k = 1.
\end{equation}
We can see that the forgetting factor in Eq (\ref{v_ff}) depends on estimates, leading to a \textit{variable forgetting factor}. It automatically takes the value $1$ if the norm of the states becomes null, leading to the standard Recursive Least Squares (RLS) algorithm. In the cases where the states sequence is such that the term $\hat{x}^T_{k-N|k-1} P_{k-N-1} \hat{x}_{k-N|k-1}$ is significative with respect to one, the forgetting factor takes a lower value assuring good adaptation capabilities.
The algorithm will provide a quick adaptation and will prevent the exponential growth of $P_{k-N}$, stabilizing the estimates (controlling the amount of information employed to compute $P_{k-N}$) and the MHE problem (\ref{eq:mhe1}).

If the estimates are time--varying and constraints can change after leaving the estimation window, the forgetting factor has to guarantee a constant trace of the weighting matrix, which can be obtained by
\begin{equation}  \label{ct_ff}
  \alpha_k = \frac{1}{\Xi} \,
  \Tr{\left( P_{k-N-1} - \frac{ P_{k-N-1} \hat{x}_{k-N|k-1} \hat{x}^T_{k-N|k-1} P_{k-N-1} }{ \eta + \hat{x}^T_{k-N|k-1} P_{k-N} \hat{x}_{k-N|k-1} } \right)}, \quad \beta_k = \eta^{-1} \alpha_k,
\end{equation}
where $\Xi > 0$ is the desired trace of the weighting matrix and $\eta > 0$ is constant.

Fortescue et al. \cite{fortescue1981implementation} proposed a variable forgetting algorithm that tries to keep the amount of information constant and shows that a reasonable choice of information measure can prevent the weighting matrix from blowing-up, while retaining the adaptability. Later, Osorio Cordero and Mayne \cite{osorio1981deterministic} modified that algorithm in order to ensure global convergence. The modifications ensure that $\Tr{P_k}$ (and, hence $||P_k||$) is always less than a given constant $c$. The algorithm yields the standard least-squares estimate if $\alpha_k = 1$, and the weighted least-squares estimate if $\alpha_k \in (0,1)$.
The memory length $N_k$ is related to the forgetting factor $\alpha_k$ through $\alpha_k = 1 - 1/N_{k}$; forgetting factors of $0$, $0.99$ and $1$ correspond to memory lengths of $1$, $100$ and $\infty$, respectively.
The various constant required by the algorithm may be chosen as follows: $\sigma$ is chosen so that $\sigma/\sigma_w$ is large, where $\sigma_w$ denotes the process noise variance; $P_0 = \delta I$, where $\delta$ is large ($10^6$, for example), $c>\delta$ is large. The variable forgetting factor and the information matrix can be calculated as follows
\begin{align}
\begin{split}
\label{vf_mayne}
\epsilon_{k} &= y_{k-N} - \hat{y}_{k-N} \\
% K_k &= [1 + \hat{x}_{k-N|k-1}^T P_{k-N-1} \hat{x}_{k-N|k-1} ]^{-1} P_{k-N-1} \hat{x}_{k-N|k-1} \\
N_{k} &= \left. [1+\hat{x}_{k-N|k-1}^{T} P_{k-N-1} \hat{x}_{k-N|k-1} ] \frac{\sigma}{||\epsilon_{k-N}||^2_2} \right. \\
\alpha_k &= 1 - 1/N_{k} \\
W_{k} &= [I - \frac{P_{k-N-1} \hat{x}_{k-N|k-1} \hat{x}_{k-N|k-1}^T}{1 + \hat{x}_{k-N|k-1}^T P_{k-N-1} \hat{x}_{k-N|k-1}}] P_{k-N-1} \\
& P_{k-N} = \frac{1}{\alpha_k} W_k, \text{ if } 1/\alpha_k \Tr(W_k) \leq c \\
& P_{k-N} = W_k, \text{ otherwise}.
\end{split}
\end{align}

\begin{remark}
The weighting matrix estimation algorithm can be recasted like a Kalman filter with the smoothed estimate $\hat{x}_{k-N|k-1}$ as estimated states, $\hat{v}_k$ as the estimation residual and $P_{k-N}$ the covariance matrix \cite{landau1998adaptive}.
In case of unconstrained estimation, the weighting matrix estimation algorithm is equivalent to a filter update  \cite{muske1993receding}. The relationship between RLS and the Kalman filter is well known \cite{jazwinski1970stochastic}, where in the case of the Kalman filter, the weighting matrix update formula is a special case of Eq. ($\ref{cov_01}$) under specific assumptions and a certain choice of $\alpha_k$ and $\beta_k$. For a detailed description the reader can see Section 3.2.4 of Landau et al. \cite{landau1998adaptive}.
\end{remark}

\subsection{MHE update summary}

The MHE approach solves problem (\ref{eq:mhe1}), providing a solution $\{\hat{x}_{j|k}\}_{j=k-N}^k$  that approximates the optimal solution of problem (\ref{fie_01}). For $k \leq N$, the full information problem (\ref{fie_01}) is solved. Then, for $k>N$ the MHE problem (\ref{exact_arrival_cost}) is solved with the approximate arrival cost
\begin{equation}
\tilde{Z}_{k-N} = \norm{ \hat{x}_{k-N|k} - \hat{x}_{k-N|k-1}^* }_{P_{k-N}^{-1}}^2,
\end{equation}
whose solution can be obtained using any suitable QP solver. Then, $P_{k-N}$ and $\bar{x}_{k-N}$ are updated and the horizon is receded, updating the measurements vector $\{y_j\}_{j=k-N}^k$. Algorithm \ref{algo_mhead} summarizes the proposed moving horizon estimator. The method used for the adaptive weighting matrix update defines the algorithm behavior. If we wish to use a constant trace approach ($\mathrm{MHE_{AD-CT}}$), Eq. (\ref{ct_ff}) is used. On the other hand, if we wish to use a variable forgetting factor ($\mathrm{MHE_{AD-VF}}$), Eq. (\ref{vf_mayne}) is used.

\begin{algorithm}
\caption{$\mathrm{MHE_{AD}}$ Algorithm}
\label{algo_mhead}
\begin{algorithmic}
\STATE $\bar{x}_{0}$, $P_0>0$, $Q,R>0$ and $N\geq1$
\STATE \textbf{Initialization:}
\STATE $\bar{x}_{k-N} \gets \bar{x}_0$
\STATE $P_{k-N} \gets P_0$
\FOR{samples $k=0,1,2,\ldots$}
\STATE Obtain measurement $y_k$
\IF{$k < N$}
	\STATE Prepare data vector:
	\STATE $\qquad Y \gets [y_0,\ldots,y_k]^T$
	\STATE Solve $\Phi_k$ with $\bar{x}_{0}$, $P_0$ and $Y$
\ELSE
	\STATE Prepare data vector:
	\STATE $\qquad Y \gets [y_{k-N},\ldots,y_k]^T$
	\STATE Solve $\Psi^N_k$ with $\bar{x}_{k-N}$, $P_{k-N}$ and $Y$
	\STATE $\bar{x}_{k-N} \gets \hat{x}_{k-N+1|k}$
	\STATE Update $P_{k-N}$ using Eq. (\ref{ct_ff}) or (\ref{vf_mayne})
\ENDIF
	\STATE Obtain current estimates $\hat{x}_{j|k}$ for $j=k-N,\ldots,k$
\ENDFOR
\end{algorithmic}
\end{algorithm}

\section{Stability analysis}

Stability of the estimator implies that the reconstruction error converges to zero for the nominal system
\begin{align}
\begin{split}
x_{k+1} &= A x_k, \\
y_k &= C x_k,
\end{split}
\label{eq:nominal_system}
\end{align}
with no state or measurement noise. However, for the constrained case, a poor choice of constraints may prevent convergence to the true state of the system. Rao et al. \cite{rao2001constrained} added the requirement that the evolution of the system respects the constraints, which need to satisfy only the following weaker assumption to prove stability.

\begin{assumption}
\label{assumption_existence}
Suppose the system (\ref{eq:nominal_system}) with initial condition $x_0$ generates the data. We assume there exists $x_{0|\infty}$, $\{w_{k|\infty}\}_{k=0}^{\infty}$ and $\sigma > 0$ such that
\begin{equation*}
\sum_{k=0}^{\infty} \norm{v_{k|\infty}}_{R^{-1}} + \norm{w_{k|\infty}}_{Q^{-1}} + \norm{x_{0|\infty} - \hat{x}_0}_{P_0^{-1}} \leq \sigma \norm{x_0 - \hat{x}_0}^2
\end{equation*}
and
$x_{k|\infty} \in \mathcal{X}$, $w_{k|\infty} \in \mathcal{W}$, $v_{k|\infty} \in \mathcal{V}$.
\end{assumption}

Assumption \ref{assumption_existence} states the existence of a feasible state and disturbance trajectory that yields bounded cost. Next we state the stability results for the FIE.

\begin{remark}
Suppose the matrices $Q$, $R$, and $P_0$ are positive definite, $(C,A)$ is observable, and assumption \ref{assumption_existence} holds. Then, the constrained full information estimator is an asymptotically stable observer for the system given by Eq. (\ref{eq:nominal_system}).
\end{remark}

\begin{proof}
See Muske et al. \cite{muske1993receding}.
\end{proof}

\begin{remark}
The matrix $P_k$ satisfies $||P_k|| \leq c, \, \forall k$. The boundedness of $P_k$ guarantees that the optimization problem for MHE with adaptive arrival cost update is convex and that under Assumption \ref{assumption_existence}, a solution exists.
\label{P_k_leq_c}
\end{remark}

\begin{proof}
From Lemma \ref{lem:p_defpos}, for all $k$, $P_k$ is symmetric and positive definite. Its norm $||P_k||$ is equal to $\lambda_{max}(P_k)$, the maximum eigenvalue of $P_k$. Its trace satisfies $||P_k||\leq \operatorname{trace}(P_k)$.

By choice $||P_0|| \leq \operatorname{trace}(P_0) < c$. Suppose $||P_k|| \leq c$. From Eq. (\ref{vf_mayne}), if $(1/ \alpha_k) \operatorname{trace} (W_k) \leq c$, then $||P_k||=(1/ \alpha_k)||W_k|| \leq (1/ \alpha_k)\operatorname{trace}(W_k)\leq c$. If, on the other hand, $(1/ \alpha_k) \operatorname{trace}(W_k)>c$, then $\alpha_k = 1$ and $P_k^{-1} = P_{k-1}^{-1} + \hat{x}_{k-N|k-1} \hat{x}_{k-N|k-1}^T$, i.e. $P_k$ is less positive definite than $P_{k-1}$, so that $||P_k|| \leq ||P_{k-1}|| \leq c$. Hence, $||P_k|| \leq c$.
\end{proof}

\begin{remark}
The matrix $P_k$ has a steady state-solution,
\begin{equation*}
    \underset{k \to \infty}{\lim} P_k = P_{\infty}
\end{equation*}
\end{remark}

\begin{proof}
The update formula for $P_{k-N}$ given by Eq. (\ref{vf_mayne}) can be seen as a special case of a Ricatti equation as follows
\begin{align*}
    \alpha_k P_{k-N} &= P_{k-N-1} - \frac{P_{k-N-1} \hat{x}_{k-N|k-1} \hat{x}^T_{k-N|k-1} P_{k-N-1}}{ 1 + \hat{x}^T_{k-N|k-1} P_{k-N-1} \hat{x}_{k-N|k-1}} \\
    &= P_{k-N-1} -
    P_{k-N-1} \hat{x}_{k-N|k-1}
    (1 + \hat{x}^T_{k-N|k-1} P_{k-N-1} \hat{x}_{k-N|k-1})^{-1}
    \hat{x}^T_{k-N|k-1} P_{k-N-1}
\end{align*}
which is stable if and only if all of the eigenvalues of the closed loop state transfer matrix
\begin{equation}
    I - \frac{\hat{x}_{k-N|k-1} \hat{x}_{k-N|k-1}^T P_{k-N-1}}{1 + \hat{x}^T_{k-N|k-1} P_{k-N-1} \hat{x}_{k-N|k-1}}
    \label{ricatti_transfer_matrix}
\end{equation}
are strictly inside the unit circle of the complex plane. The eigenvalues of the right hand side of the previous expression lie inside the unit circle, since
\begin{align*}
    || \hat{x}_{k-N|k-1} \hat{x}_{k-N|k-1}^T P_{k-N-1} || \leq ||\hat{x}_{k-N|k-1}|| \ ||\hat{x}_{k-N|k-1}^T|| \ ||P_{k-N-1}||
\end{align*}
and
\begin{align*}
    || \hat{x}^T_{k-N|k-1} P_{k-N-1} \hat{x}_{k-N|k-1} || \leq ||\hat{x}_{k-N|k-1}^T|| \ ||P_{k-N-1}|| \ ||\hat{x}_{k-N|k-1}||
\end{align*}
hence
\begin{align}
    \begin{split}
    \left\lVert  \frac{\hat{x}_{k-N|k-1} \hat{x}_{k-N|k-1}^T P_{k-N-1}}{1 + \hat{x}^T_{k-N|k-1} P_{k-N-1} \hat{x}_{k-N|k-1}} \right\rVert \leq 1.
    \end{split}
    \label{ricatti_transfer_matrix_rhs_norm}
\end{align}
Since the product of symmetric positive semidefinite matrices is another symmetric positive semidefinite matrix, the eigenvalues of (\ref{ricatti_transfer_matrix_rhs_norm}) are all greater than $0$ and bounded by $1$. Then, the eigenvalues of (\ref{ricatti_transfer_matrix}) satisfy that are greater than 0 and bounded by $1$.
Hence, the matrix $P_k$ has a steady-state solution $P_\infty$.
\end{proof}

\begin{theorem}
$V_k = \tilde{x}_{k}^{T} P_{k}^{-1} \tilde{x}_{k}$ is a Lyapunov function for MHE with adaptive arrival cost update, where $\tilde{x}_{k} = x_k - \hat{x}_k$ and the matrix $P_k^{-1}$ is given by Eq. (\ref{vf_mayne}), i.e. $P_{k}^{-1} = \alpha_k (P_{k-1}^{-1} + \hat{x}_{k-N|k-1} \hat{x}_{k-N|k-1}^T)$.
\end{theorem}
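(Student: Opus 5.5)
To show that $V_k=\tilde{x}_k^T P_k^{-1}\tilde{x}_k$ is a Lyapunov function, three things are needed: $V_k$ is positive definite in $\tilde{x}_k$, it is bounded above and below by class-$\mathcal{K}$ functions of $\|\tilde{x}_k\|$, and it is nonincreasing along the estimation error dynamics of the nominal system (\ref{eq:nominal_system}). Positivity follows from Lemma \ref{lem:p_defpos}, since $P_k>0$ for all $k$. For uniform bounds, Remark \ref{P_k_leq_c} gives $\|P_k\|\le c$, so $\lambda_{\min}(P_k^{-1})\ge 1/c$ and $V_k\ge \|\tilde{x}_k\|^2/c$. For the upper bound I would use that $P_k\to P_\infty$ (the steady-state remark) with $P_\infty>0$, so $\lambda_{\max}(P_k^{-1})$ stays bounded for large $k$. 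I would also argue separately that the update $P_k^{-1}=\alpha_k(P_{k-1}^{-1}+\hat{x}\hat{x}^T)$ keeps $P_k^{-1}$ bounded as long as the estimates stay bounded.

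For the decrease condition, I would write the error recursion in the nominal case: $\tilde{x}_{k}=A\tilde{x}_{k-1}$ plus a correction produced by the MHE optimization. The plan is not to compute this correction explicitly. Instead, I would use optimality of the MHE cost $\Psi_k^N$ against a feasible candidate, namely the true trajectory shifted by one sample. By Assumption \ref{assumption_existence}, the true trajectory is feasible with zero noises. Comparing $\Psi_k^{N*}$ with the cost of this candidate gives an inequality in which the arrival-cost term $\tilde{Z}_{k-N}=\|\hat{x}_{k-N|k}-\bar{x}_{k-N}\|^2_{P_{k-N}^{-1}}$ plays the role of $V$, along the lines of the argument of Rao et al.\ \cite{rao2001constrained}. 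Substituting the update $P_k^{-1}=\alpha_k(P_{k-1}^{-1}+\hat{x}\hat{x}^T)$ then gives
\begin{equation*}
V_k-V_{k-1}=\tilde{x}_k^T\alpha_k\left(P_{k-1}^{-1}+\hat{x}_{k-N|k-1}\hat{x}_{k-N|k-1}^T\right)\tilde{x}_k-\tilde{x}_{k-1}^TP_{k-1}^{-1}\tilde{x}_{k-1}.
\end{equation*}
The goal is to show this is at most $-\,\|\hat{v}\|^2_{R^{-1}}-\|\hat{w}\|^2_{Q^{-1}}\le 0$. Here the factor $\alpha_k<1$ from (\ref{vf_mayne}) supplies a contraction of the $P_{k-1}^{-1}$ part. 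The rank-one term must be dominated by the stage costs that MHE removes from the window, which are positive by observability of $(C,A)$.

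The main obstacle is the rank-one term $\alpha_k(\hat{x}^T\tilde{x}_k)^2$. It involves the estimate itself rather than the error, so it does not vanish as $\tilde{x}\to 0$ unless the estimates themselves go to zero. My first attempt would be to exploit the specific variable forgetting factor: $1-\alpha_k=1/N_k$ is proportional to $\|\epsilon_k\|^2$. Then $\alpha_k(\hat{x}^T\tilde{x})^2\le \alpha_k\|\hat{x}\|^2_{P}\|\tilde{x}\|^2_{P^{-1}}$ can be bounded using $1+\hat{x}^TP\hat{x}$ appearing in $N_k$. If $\sigma$ is large enough, this term is absorbed by $(1-\alpha_k)V_{k-1}$ together with the prediction-error cost. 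If that bound fails, I would fall back to showing nonincreasing $V_k$ only over a window of $N$ steps, using observability to lower-bound the accumulated output-error cost by $\|\tilde{x}_{k-N}\|^2$.
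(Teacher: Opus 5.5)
There is a genuine gap. The decrease of $V_k$ is never established, and both ingredients you propose for it fail.

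First, comparing $\Psi^{N*}_k$ with the true-trajectory candidate gives inequalities for the optimal MHE cost, not for $V_k$. The term $\tilde{Z}_{k-N}=\|\hat{x}_{k-N|k}-\bar{x}_{k-N}\|^2_{P^{-1}_{k-N}}$ measures the distance between two \emph{estimates} of $x_{k-N}$, not the error $x_k-\hat{x}_k$. The stage costs $\|\hat v\|^2_{R^{-1}}$, $\|\hat w\|^2_{Q^{-1}}$ you want on the right-hand side do not occur in $V_k-V_{k-1}$ at all. Moreover, the Rao et al.\ argument requires that the approximate arrival cost not exceed its own dynamic-programming propagation through the window. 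In the quadratic case this means, roughly, that $P^{-1}_{k-N}$ is no larger than the Riccati-propagated information. The adaptive update adds $\alpha_k\hat{x}\hat{x}^T$ with $\hat{x}=\hat{x}_{k-N|k-1}$. This term is built from the state estimate rather than from $C^TR^{-1}C$ and $Q$, so nothing guarantees that condition.

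Second, the absorption step does not work. With your Cauchy--Schwarz bound and Eq.\ (\ref{vf_mayne}), $\alpha_k(1+\hat{x}^TP_{k-1}\hat{x})=1+\hat{x}^TP_{k-1}\hat{x}-\|\epsilon_k\|^2/\sigma$. So your route reduces to $(1+\hat{x}^TP_{k-1}\hat{x}-\|\epsilon_k\|^2/\sigma)\|\tilde{x}_k\|^2_{P^{-1}_{k-1}}\le\|\tilde{x}_{k-1}\|^2_{P^{-1}_{k-1}}$.
\begin{itemize}
\item Without a contraction of the MHE error, which you deliberately left uncomputed, this needs $\|\epsilon_k\|^2\ge\sigma\,\hat{x}^TP_{k-1}\hat{x}$. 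That requires $\sigma$ \emph{small}, not large: a large $\sigma$ drives $\alpha_k\to1$ and removes the forgetting margin.
\item Even with small $\sigma$ it fails near convergence. In the nominal case $\epsilon_k$ is an output error of the order of the estimation error, so $(1-\alpha_k)V_{k-1}$ is quartic in the error. Meanwhile $\alpha_k(\hat{x}^T\tilde{x}_k)^2$ is quadratic in the error, with a coefficient of order $\|\hat{x}\|^2$ that does not vanish.
\item On the capped branch $P_{k-N}=W_k$ of (\ref{vf_mayne}) there is no factor $\alpha_k$ at all, so the rank-one term is a pure increase.
\end{itemize}
The $N$-step fallback is only named. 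It meets the same obstruction, because the increments of $P^{-1}$ are driven by $\hat{x}$, not by output errors.

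Your upper bound has a hole too. The steady-state remark does not give $P_\infty>0$, and bounded estimates do not bound $P_k^{-1}$. When $\epsilon_k\to0$ the rule gives $\alpha_k\to1$, so $P_k^{-1}\approx P_0^{-1}+\sum_j\hat{x}_j\hat{x}_j^T$. This diverges unless the estimates are square-summable; the trace cap bounds $P_k$, not $P_k^{-1}$.

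For comparison, the paper's own proof does not supply the missing idea either. It sets $\tilde{x}_k=A\tilde{x}_{k-1}$, dropping the MHE correction, and expands $V_k-\alpha_kV_{k-1}=\alpha_k\tilde{x}_{k-1}^TM_k\tilde{x}_{k-1}$ with $M_k=A^TP_{k-1}^{-1}A+A^T\hat{x}\hat{x}^TA-P_{k-1}^{-1}$. It then concludes from $P_k>0$, $A^T\hat{x}\hat{x}^TA\succeq0$ and observability, without showing $M_k\preceq0$. The positive semidefinite rank-one term pushes $M_k$ the wrong way, which is exactly the obstacle you flagged. Furthermore, if $Av=\lambda v$ with $|\lambda|>1$, then $v^*M_kv\ge(|\lambda|^2-1)v^*P_{k-1}^{-1}v>0$, so observability cannot rescue that computation.

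Your diagnosis is right, but closing the argument needs something extra. Options include a proven contraction of the MHE error in the $P^{-1}_{k-1}$-norm with margin exceeding $\hat{x}^TP_{k-1}\hat{x}$, an underestimation condition on the arrival cost, or a different Lyapunov candidate such as the optimal MHE cost.
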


\begin{proof}
Since $\alpha_k V_{k-1} \leq V_{k-1}$, we can subtract $\alpha_k V_{k-1}$ to $V_{k}$
\begin{align*}
V_{k} - \alpha_k V_{k-1} &= \tilde{x}_{k}^{T} P_{k}^{-1} \tilde{x}_{k}  - \alpha_k \tilde{x}_{k-1}^{T} P_{k-1}^{-1} \tilde{x}_{k-1} \\
    &= (A \tilde{x}_{k-1})^{T} P_{k}^{-1} (A \tilde{x}_{k-1})  - \alpha_k \tilde{x}_{k-1}^{T} P_{k-1}^{-1} \tilde{x}_{k-1} \\
    &= \tilde{x}_{k-1}^{T} A^{T} P_{k}^{-1} A \tilde{x}_{k-1}  - \alpha_k \tilde{x}_{k-1}^{T} P_{k-1}^{-1} \tilde{x}_{k-1} \\
    &= \tilde{x}_{k-1}^{T} A^{T} [\alpha_k (P_{k-1}^{-1} + \hat{x}_{k-N|k-1} \hat{x}_{k-N|k-1}^T)] A \tilde{x}_{k-1}  - \alpha_k \tilde{x}_{k-1}^{T} P_{k-1}^{-1} \tilde{x}_{k-1} \\
    &= \alpha_k \tilde{x}_{k-1}^{T} [A^{T} P_{k-1}^{-1} A  + A^{T} \hat{x}_{k-N|k-1} \hat{x}_{k-N|k-1}^T A - P_{k-1}^{-1}] \tilde{x}_{k-1}
\end{align*}
Since $P_k$ is symmetric positive definite for all $k$ and $A^T \hat{x}_{k-N|k-1} \hat{x}_{k-N|k-1}^T A$ is symmetric positive semidefinite for all $k$, it follows that if the system given by Eq. (\ref{eq:nominal_system}) is observable, then MHE with adaptive arrival cost update is an asymptotically stable observer.
\end{proof}

\subsection{Bounded stability}

From Remark \ref{P_k_leq_c} we know that $||P_k||$ is nonincreasing, hence $||P_k^{-1}||$ is nondecreasing. By the convergence of $P_k$ to $P_\infty$, we get that $||P_k^{-1}|| \leq ||P_{k+1}^{-1}|| \leq ||P_{\infty}^{-1}||$. Therefore, the arrival and stage cost satisfy the following bounds
\begin{equation*}
    \lambda_{min}(P_{k-N}^{-1}) ||\hat{x}_{k-N|k} - \bar{x}_{k-N}|| \leq ||\hat{x}_{k-N|k} - \bar{x}_{k-N}||_{P_{k-N}^{-1}} \leq \lambda_{max}(P_{\infty}^{-1}) ||\hat{x}_{k-N|k} - \bar{x}_{k-N}||
\end{equation*}
\begin{equation*}
\lambda_{min}(Q^{-1}) ||\hat{w}|| \leq ||\hat{w}||_{Q^{-1}} \leq \lambda_{max}(Q^{-1}) ||\hat{w}||
\end{equation*}
\begin{equation*}
\lambda_{min}(R^{-1}) ||\hat{v}|| \leq ||\hat{v}||_{R^{-1}} \leq \lambda_{max}(R^{-1}) ||\hat{v}||.
\end{equation*}
Under bounded disturbances, the MHE estimator with adaptive arrival cost update is robust globally asymptotically stable. Furthermore, if $\operatorname{lim}_{k \to \infty} w_k = 0$ and $\operatorname{lim}_{k \to \infty} v_k = 0$, then $\operatorname{lim}_{k \to \infty} ||\hat{x}_k - \bar{x}_k || = 0$.
\begin{proof}
See Theorem 7 of \cite{Muller2016}.
\end{proof}

\section{Results}
\label{sec4}

For the simulations tests we adopted the discrete-time model used by Rao et al. in  \cite{rao2003constrained}
\begin{equation}
\label{results_model}
\begin{split}
x_{k+1} &= \left[ \begin{matrix} 0.99 & 0.2 \\ -0.1 & 0.3 \end{matrix} \right] x_k + \left[ \begin{matrix} 0 \\ 1 \end{matrix} \right] w_k, \\
y_k &= \left[ \begin{matrix} 1  & -3 \end{matrix} \right] x_k + v_k,
\end{split}
\end{equation}
where $\{v_k\}$ is a sequence of independent, normally distributed random variables with zero mean and covariance $\sigma_v$, and $w_k = |z_k|$ where $\{z_k\}$ is a sequence of independent, normally distributed random variables with zero mean and covariance $\sigma_w$. For all the following examples, the initial state $\bar{x}_0$ is normally distributed with zero mean and covariance $\sigma_x=0.5$ and we assume that $x_0 = [0,0]^T$ and $\bar{x}_0 = [0.5,-0.5]$. The constrained estimation problem is formulated here with $Q=1$, $R=0.01$, $P_0 = 0.5$ and we add the inequality constraint $w_k \geq 0$ to include the available information on the random sequence.

First, the proposed algorithms ($\mathrm{MHE_{AD-VF}}$ and $\mathrm{MHE_{AD-CT}}$)  are compared with the full information estimator (FIE) and with a linear algorithm ($\mathrm{MHE_{KF}}$) available in the Nonlinear Model Predictive Control Tools for CasADi (mpc-tools-casadi) toolbox \cite{mpc-tools-casadi}. This algorithm propagates the arrival cost and initial state estimate using a Kalman filter. The simulations evaluates the effect of the horizon size on the performance of the estimators. Three different horizon sizes ($N=3$, $N=6$ and $N=10$) were used for each MHE algorithm with $\sigma_v = 0.1$ and $\sigma_w=1.0$. The sum square estimation error
\begin{equation}
\xi^{(i)} = \sum_{j=0}^{k}(x^{(i)}_{j} - \hat{x}^{(i)}_{j})^2, \nonumber
\end{equation}
where the superscript $(i)$ denotes the $i$th component of $x$, was used as benchmark. In this test, 100 trials were run and the average sum square estimation error was computed. The results are shown in Tables \ref{tab:200trials_x1} and \ref{tab:200trials_x2}.
It can be seen that $\mathrm{MHE_{AD}}$ errors are always lower than $\mathrm{MHE_{KF}}$ and closer to FIE on every window size considered. As the window size is enlarged, the arrival cost plays a less significant role on the cost function of the optimization problem and the cost of the three estimators tends to converge to the same value. A good approximation of the arrival cost allows us to decrease the window size. In this aspect, it can be seen that $\mathrm{MHE_{AD}}$ errors do not change significantly with the different horizon sizes, showing that this algorithm seems to be a good method for the arrival cost approximation, even for short horizons.

\begin{table}
\begin{center}
\begin{tabular}{ccccc}
\hline
& \multicolumn{4}{c}{$x^{(1)}$} \\

& FIE & $\mathrm{MHE_{AD-VF}}$ & $\mathrm{MHE_{AD-CT}}$ & $\mathrm{MHE_{KF}}$ \\
\hline
\\
$N=3$  & 13.22 &  20.44 & 27.37 & 103.42 \\
$N=6$  & 13.22 &  17.77 & 16.25 &  50.68 \\
$N=10$ & 13.22 &  15.28 & 14.51 &  26.54 \\
\hline
\end{tabular}
\end{center}
\caption{Comparison of the effect of the horizon size on the average sum square estimation error for state $x^{(1)}$.}
\label{tab:200trials_x1}
\end{table}

\begin{table}
\begin{center}
\begin{tabular}{ccccc}
\hline
& \multicolumn{4}{c}{$x^{(2)}$} \\

& FIE & $\mathrm{MHE_{AD-VF}}$ & $\mathrm{MHE_{AD-CT}}$ & $\mathrm{MHE_{KF}}$ \\
\hline
\\
$N=3$  & 1.55 & 2.20 & 2.84 & 11.55 \\
$N=6$  & 1.55 & 1.96 & 1.76 & 5.71 \\
$N=10$ & 1.55 & 1.71 & 1.61 & 3.02 \\
\hline
\end{tabular}
\end{center}
\caption{Comparison of the effect of the horizon size on the average sum square estimation error for state $x^{(2)}$.}
\label{tab:200trials_x2}
\end{table}

\subsection{Example 1: $\sigma_w=1.0$, $\sigma_v=0.1$, $N=5$}

In this example, the process noise covariance is set to $\sigma_w=1.0$ and the measurement noise covariance to $\sigma_v=0.1$. Figure \ref{fig:results} shows the states $x^{(1)}$, $x^{(2)}$ and its estimates (see Fig. \ref{fig:x1} and \ref{fig:x2}) with $N=5$. It can be seen that $\mathrm{MHE_{AD}}$ estimation is closer to FIE despite the short horizon chosen. As time passes, the state estimates of $\mathrm{MHE_{AD}}$ converge to the estimates provided by FIE. Figures \ref{fig:x1_err} and \ref{fig:x2_err} show the square error $\xi^{(i)}$ at every sample time. In these figures it can be seen that the error of the proposed methods are similar to the one obtained using FIE. On the other hand, from these figures it can also be seen that the performance of $\mathrm{MHE_{KF}}$ is degraded, mainly because the window size was set to $N=5$. This difference can be explained by the fact that KF update does not approximate properly the arrival cost due to the effect of constraints \cite{robertson2002use}. This situation gets worse when the horizon is shorter, as was mentioned in the previous experiment.

\begin{figure}
	\begin{subfigure}[b]{.45\linewidth}
	\includegraphics[width=1.0\textwidth]{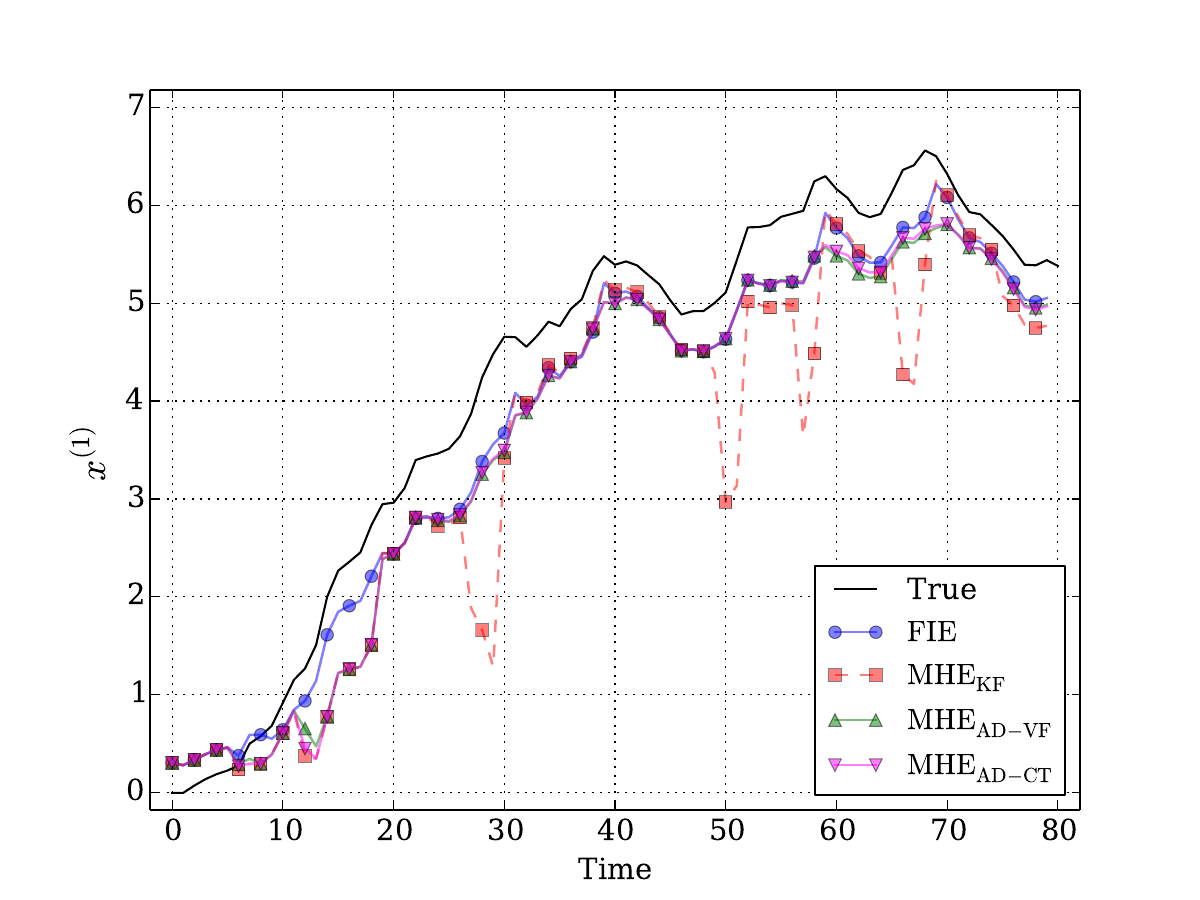}
	\caption{$x^{(1)}$}{}\label{fig:x1}
	\end{subfigure}
	\begin{subfigure}[b]{.45\linewidth}
	\includegraphics[width=1.0\linewidth]{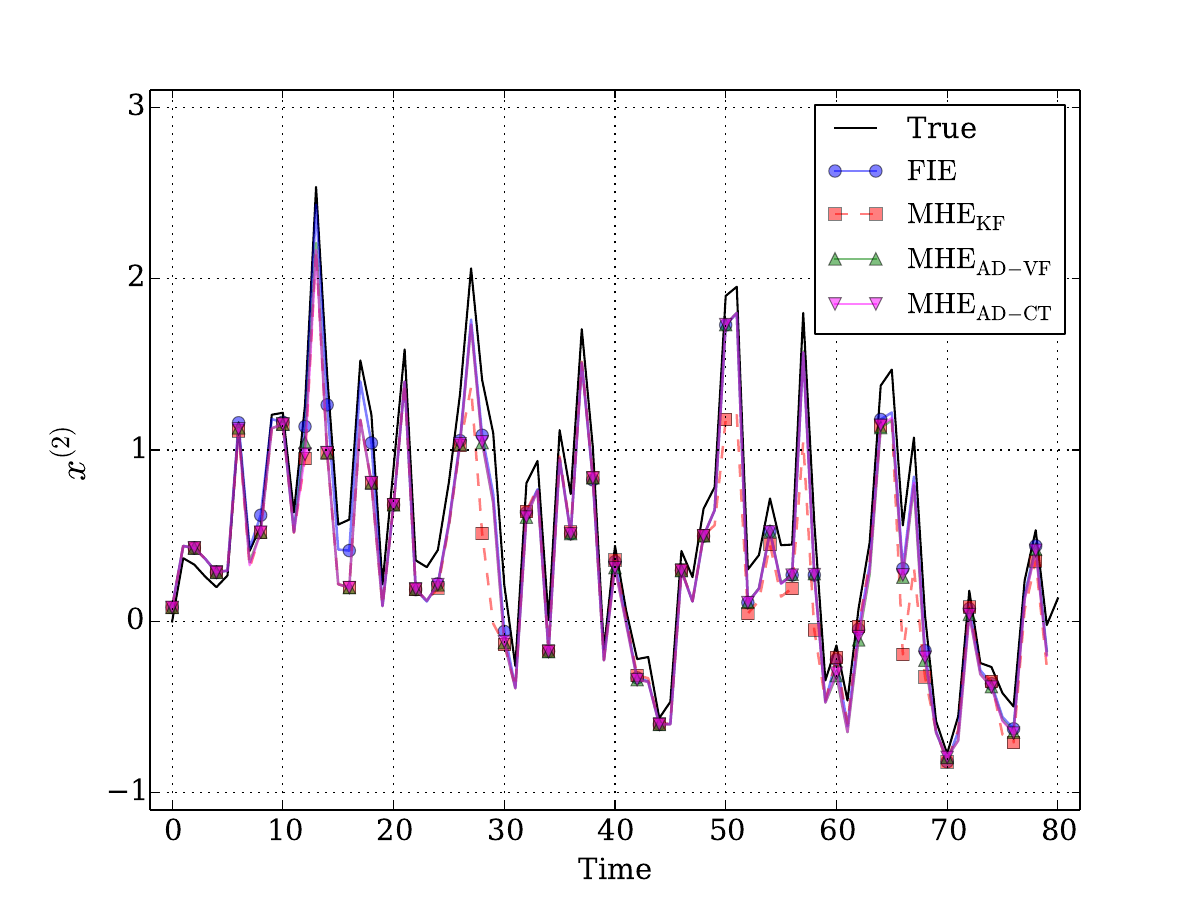}
	\caption{$x^{(2)}$}{}\label{fig:x2}
	\end{subfigure}

	\begin{subfigure}[b]{.45\linewidth}
	\includegraphics[width=1.0\linewidth]{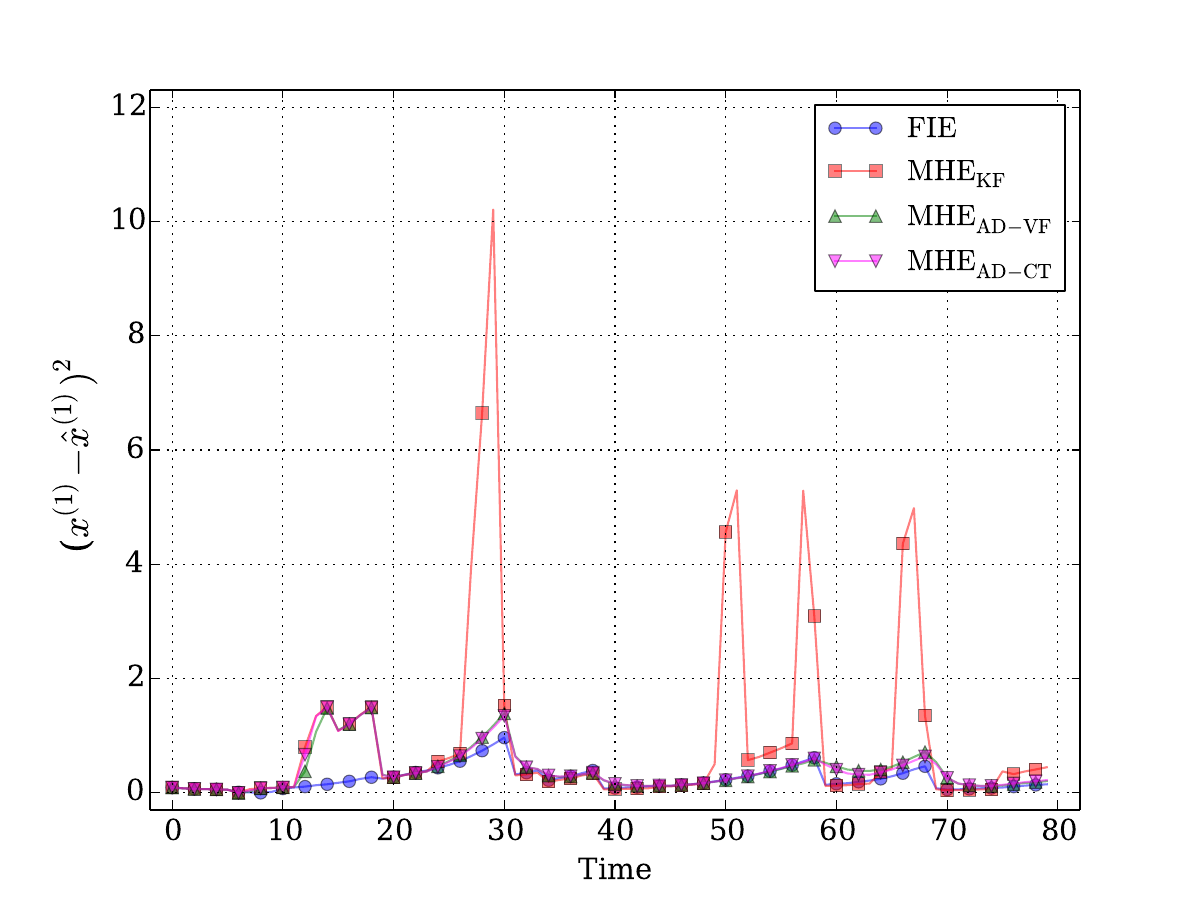}
	\caption{$(x^{(1)}-\hat{x}^{(1)})^2$}{}\label{fig:x1_err}
	\end{subfigure}
	\begin{subfigure}[b]{.45\linewidth}
	\includegraphics[width=1.0\linewidth]{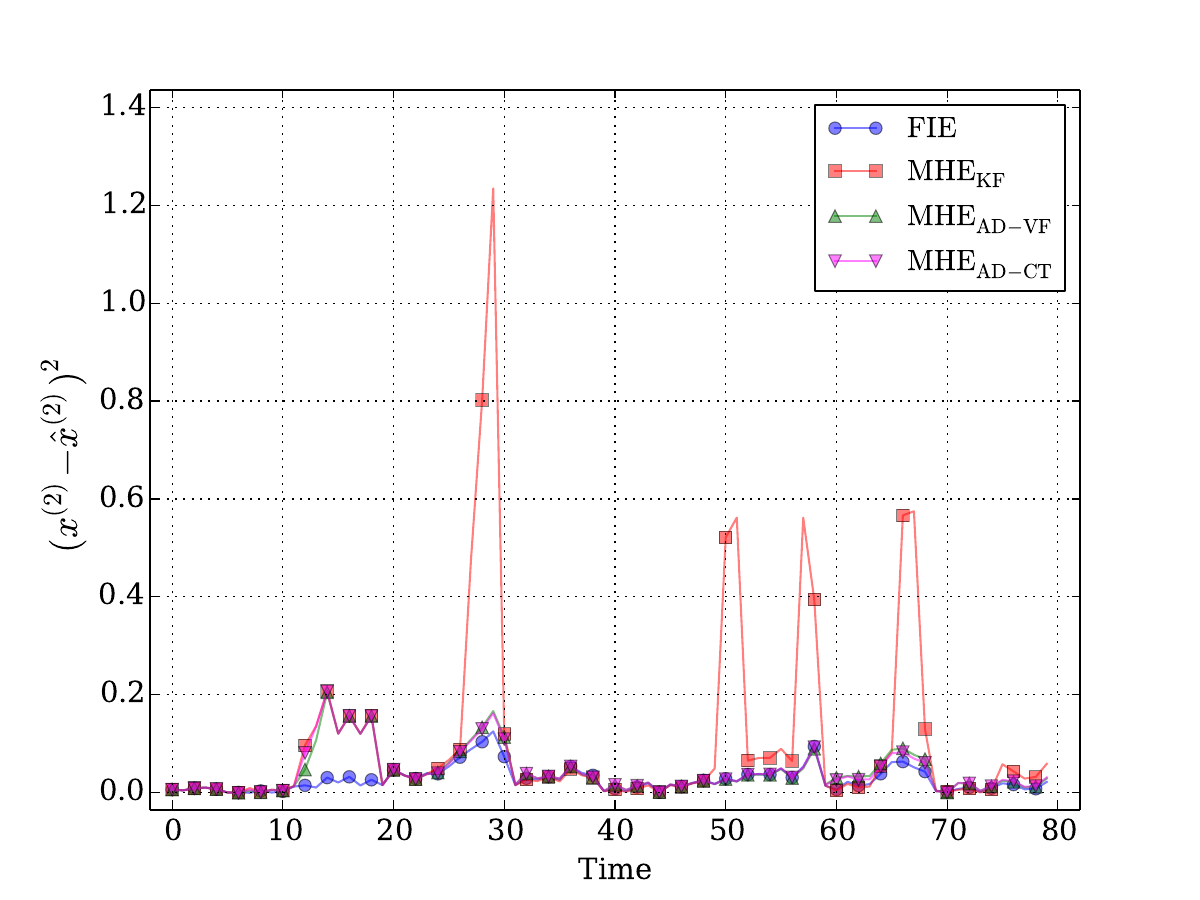}
	\caption{$(x^{(2)}-\hat{x}^{(2)})^2$}{}\label{fig:x2_err}
	\end{subfigure}
	\caption{Comparison of estimators for model (\ref{results_model}) with horizon size $N=5$.}
	\label{fig:results}
\end{figure}

\subsection{Example 2: $\sigma_w=0.4$, $\sigma_v=0.1$, $N=5$}

In this example, although the process noise covariance is set to $\sigma_w=0.4$, it varies with time and the different MHE algorithms do not have knowledge of this happening.  The covariance $\sigma_w$ will take values $0.1$ when $0\leq k <20$, $0.25$ when $20 \leq k < 40$, $1.5$ when $40 \leq k < 100$, $1.0$ when $100 \leq k < 150$ and $0.4$ when $150 \leq k < 200$. We can see in Fig. \ref{fig:x1_x2_varvar} that the proposed algorithms still perform better than $\mathrm{MHE}_{KF}$. Fig. \ref{fig:pinv_varvar} shows the trace of the matrix $P_{k-N}^{-1}$ used by every algorithm except for FIE, since in that case the matrix $P_0$ does not evolve with time. We can see that the $\mathrm{MHE_{KF}}$ matrix converges to a steady state value and never changes again, while the proposed algorithms adapt its weight according to the changing conditions of the process noise. It can be seen that the trace of $P_{k-N}^{-1}$ grows when the error is smaller, showing that it has more confidence on the estimates. On the other hand, when the difference between the real state and the estimate grows, the trace of the weighting matrix becomes smaller, providing better adaptation even under the influence of unknown and varying disturbances.
\begin{figure}
	\begin{subfigure}[b]{.45\linewidth}
	\includegraphics[width=1.0\linewidth]{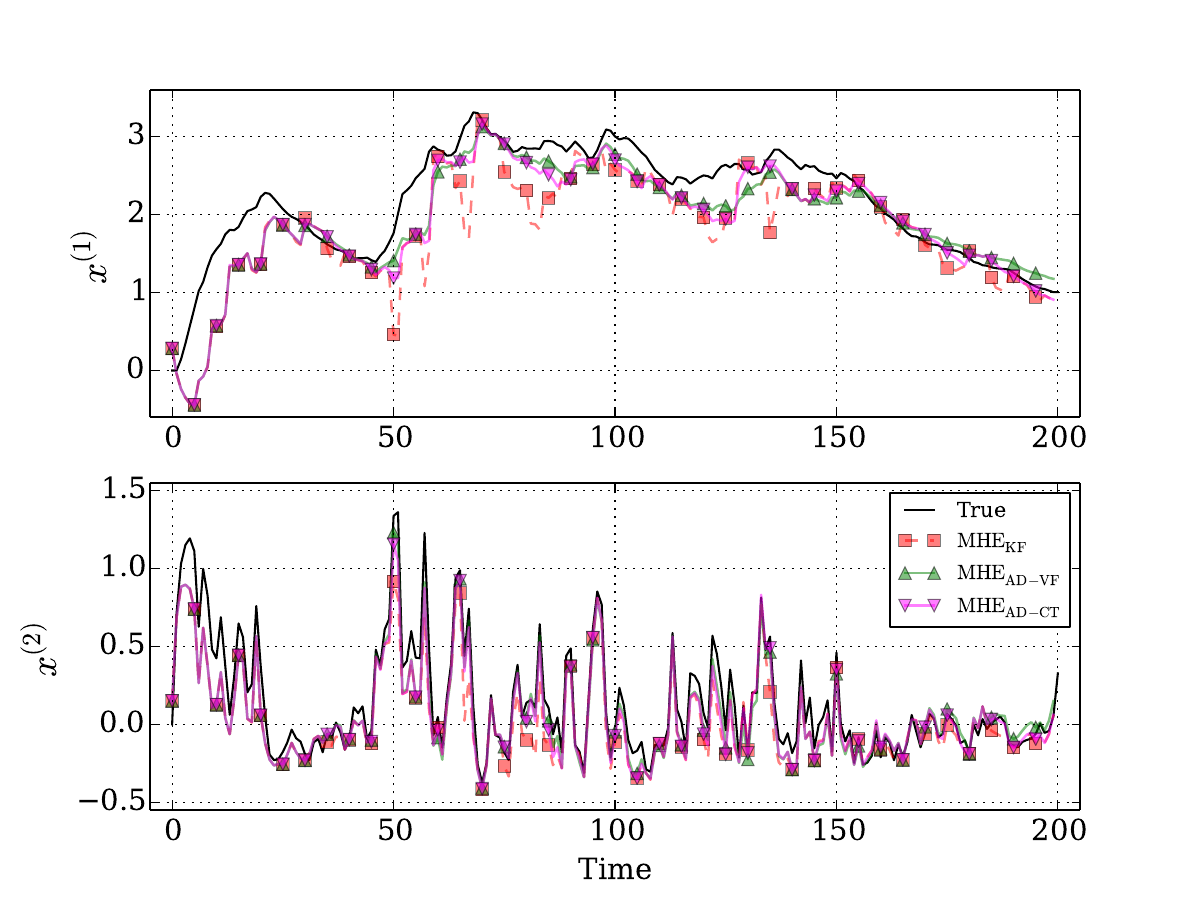}
	\caption{$x^{(1)}$ and $x^{(2)}$}{}\label{fig:x1_x2_varvar}
	\end{subfigure}
	\begin{subfigure}[b]{.45\linewidth}
	\includegraphics[width=1.0\linewidth]{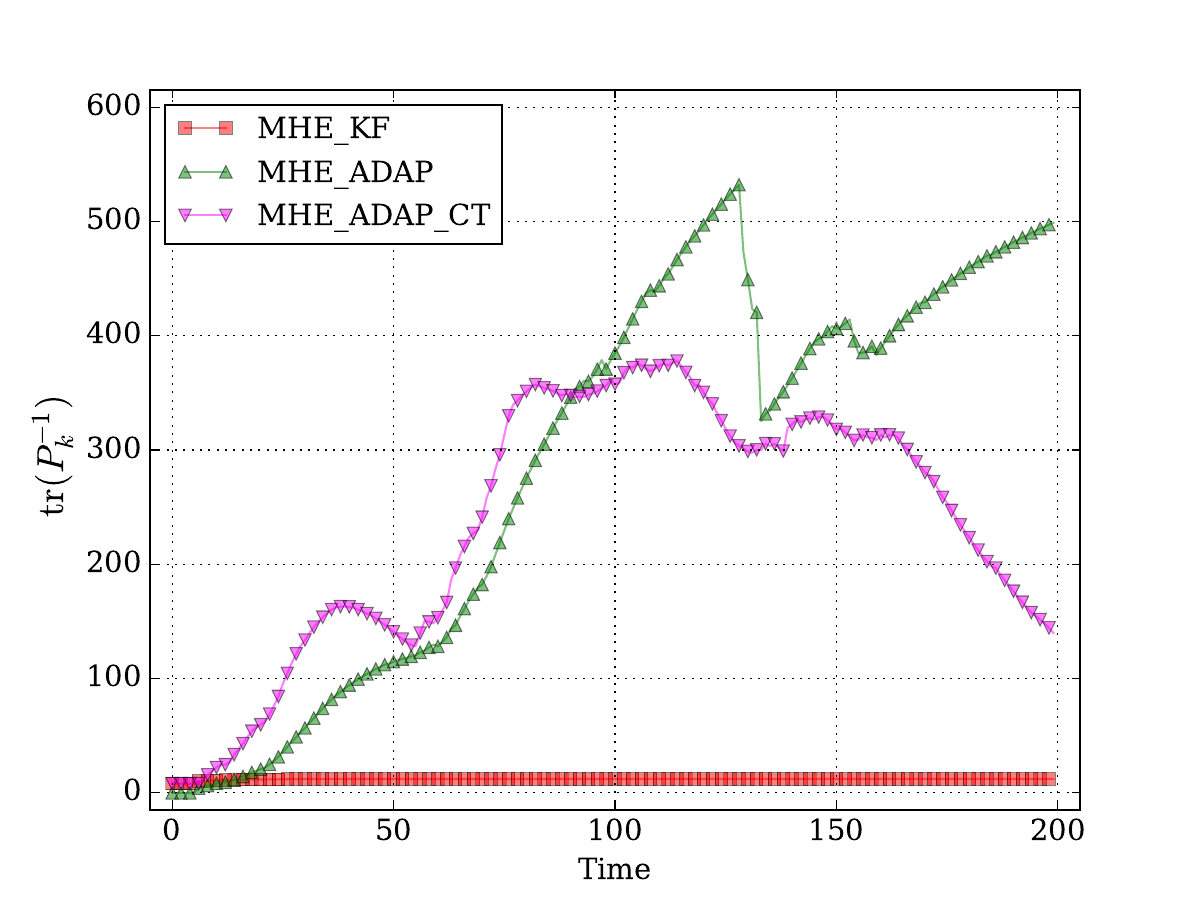}
	\caption{Evolution of $\mathrm{trace}(P^{-1}_k)$}{}\label{fig:pinv_varvar}
	\end{subfigure}
	\caption{Comparison of estimators for model (\ref{results_model}) with horizon size $N=5$ and variable $\sigma_w$.}
	\label{fig:results_varvar}
\end{figure}
In Figures \ref{fig:fig_ff_vf_n_5} and \ref{fig:fig_ff_ct_n_5} we can see the evolution of the forgetting factors of $\mathrm{MHE_{VF}}$ and $\mathrm{MHE_{CT}}$, respectively. In both cases we can see that $\alpha_k$ is nearly $1$ and the algorithm behaves like a weighted least-squares estimator with very long asymptotic memory length. In the cases where the noise increases and the estimator is not following the system, the forgetting factor decreases in order to allow for rapid adaptation of the weighting matrix. This is the main reason why the proposed algorithm provides a better approximation of the arrival cost.
\begin{figure}
	\begin{subfigure}[b]{.45\linewidth}
	\includegraphics[width=1.0\linewidth]{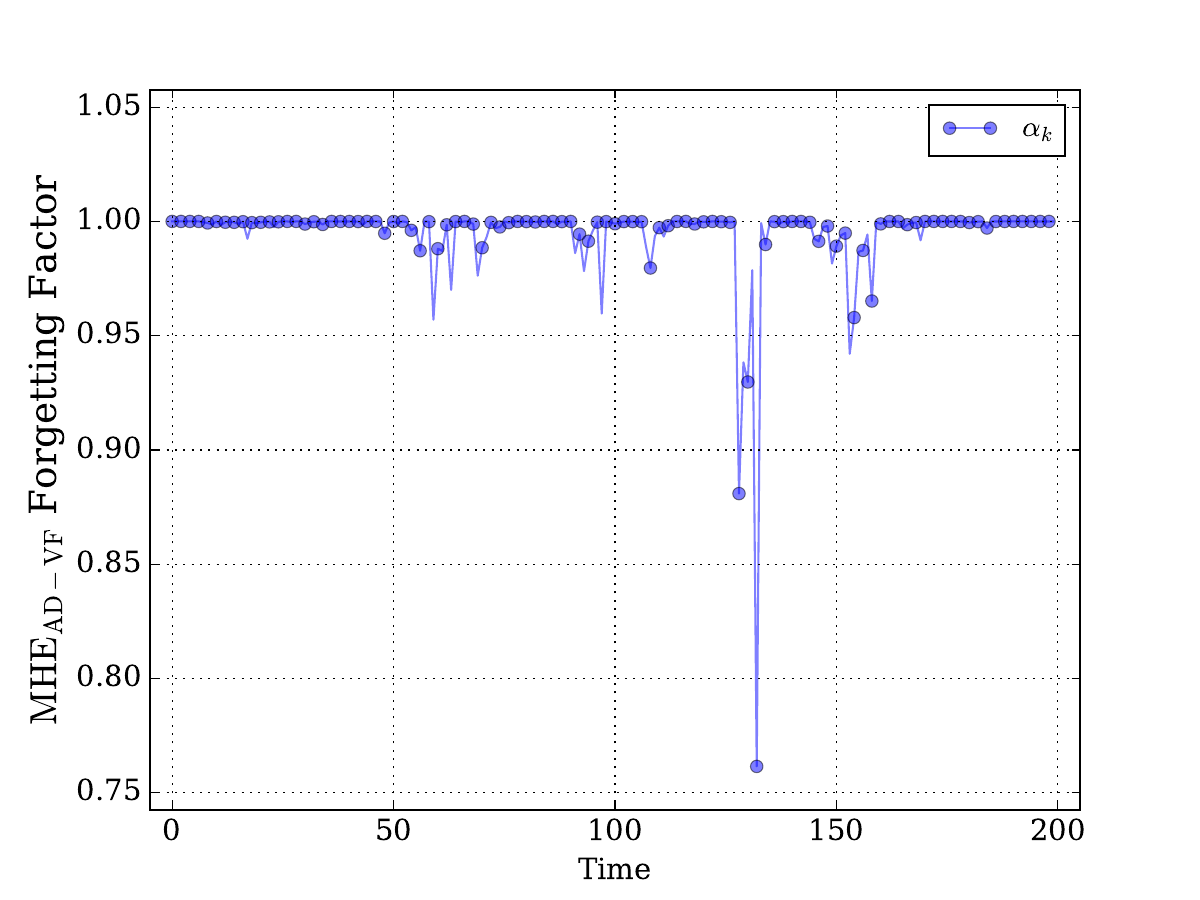}
	\caption{}{}\label{fig:fig_ff_vf_n_5}
	\end{subfigure}
	\begin{subfigure}[b]{.45\linewidth}
	\includegraphics[width=1.0\linewidth]{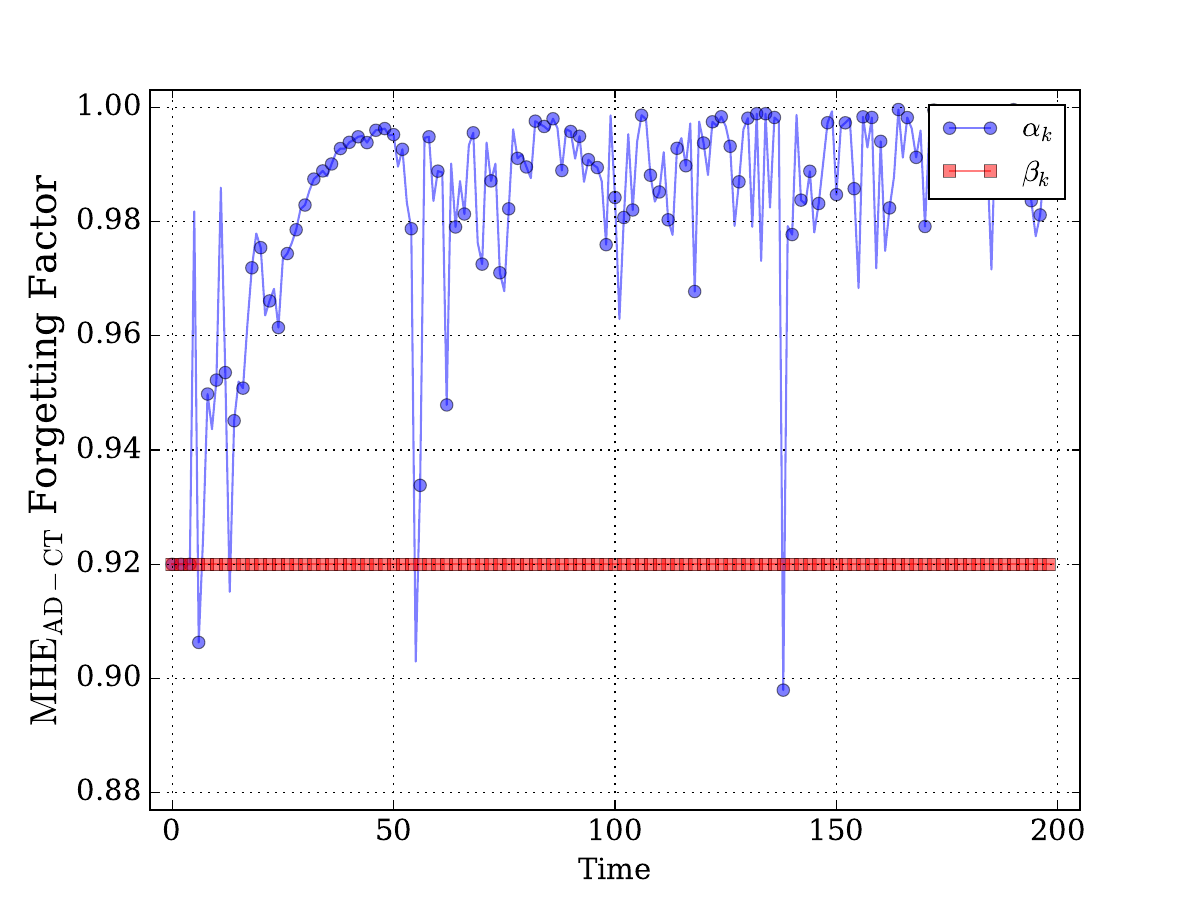}
	\caption{}{}\label{fig:fig_ff_ct_n_5}
	\end{subfigure}
	\caption{Evolution of the sequences of forgetting factors for $\mathrm{MHE_{VF}}$ and $\mathrm{MHE_{CT}}$.}
	\label{fig:results_ff}
\end{figure}

\section{Conclusion}
\label{sec_concl}

In this work we investigated an alternative method to approximate the arrival cost for the MHE problem on linear systems. To the best of our knowledge, adaptive estimation techniques have not been used before to update the weighting matrix $P_{k-N}$. From the results obtained we could show that adaptive estimation techniques give a good approximation of the arrival cost.  This fact is due to the ability of the variable forgetting factor to provide good adaptation capabilities allowing the estimator to incorporate the relevant information of the data and follow not only slow changes but also sudden changes in the dynamics. It is worth mentioning that the proposed approach can be easily extended to nonlinear systems, since the update mechanism only depends on the state estimates and not on the model. The use of CasADi toolbox \cite{Andersson2013b} allows for fast implementation of the optimization algorithms, leading to real-time operation.

\section{Acknowledgments}
The authors wish to thank Dr. Leandro Di Persia for the insightful comments on covariance estimation. The research was supported by \emph{Universidad Nacional de Litoral} (with CAID 501201101 00529 LI) and \emph{Consejo Nacional de Investigaciones Cient\'ificas y T\'ecnicas} (CONICET) from Argentina.

\printbibliography

\end{document}